\newcommand{\Figure}[1]{Figure~\ref{fig:#1}}
\newcommand{\Table}[1]{Table~\ref{tab:#1}}
\newcommand{\Equation}[1]{Equation~\ref{eq:#1}}
\newcommand{\Section}[1]{Section~\ref{sec:#1}}
\renewcommand\vec{\mathbf}
\newcommand{\mat}{\mathbf}
\setlist[itemize]{noitemsep, topsep=0pt}
\setlist[enumerate]{nosep}
\newcommand{\nosemic}{\renewcommand{\@endalgocfline}{\relax}}% Drop semi-colon ;
\newcommand{\dosemic}{\renewcommand{\@endalgocfline}{\algocf@endline}}% Reinstate semi-colon ;
\let\oldnl\nl% Store \nl in \oldnl
\newcommand{\nonl}{\renewcommand{\nl}{\let\nl\oldnl}}% Remove line number for one line
\newtheorem{prop}{Proposition}
\begin{document}

\title{Joint Supervised and Self-supervised Learning for MRI Reconstruction}

\author{George Yiasemis\\
Netherlands Cancer Institute\\
Amsterdam, Netherlands\\
% {\tt\small g.yiasemis@nki.nl}
% For a paper whose authors are all at the same institution,
% omit the following lines up until the closing ``}''.
% Additional authors and addresses can be added with ``\and'',
% just like the second author.
% To save space, use either the email address or home page, not both
\and
Nikita Moriakov\\
Netherlands Cancer Institute\\
Amsterdam, Netherlands\\
\and
Clara I. Sánchez\\
University of Amsterdam\\
Amsterdam, Netherlands\\
\and
Jan-Jakob Sonke\\
Netherlands Cancer Institute\\
Amsterdam, Netherlands\\
\and
Jonas Teuwen
\thanks{\tt\small Corresponding author: Jonas Teuwen (Email: j.teuwen@nki.nl)}
\\
Netherlands Cancer Institute\\
Amsterdam, Netherlands\\
}

\maketitle

\begin{abstract}
    Magnetic Resonance Imaging (MRI) represents an important diagnostic modality; however, its inherently slow acquisition process poses challenges in obtaining fully-sampled $k$-space data under motion. In the absence of fully-sampled acquisitions, serving as ground truths, training deep learning algorithms in a supervised manner to predict the underlying ground truth image becomes challenging. To address this limitation, self-supervised methods have emerged as a viable alternative, leveraging available subsampled $k$-space data to train deep neural networks for MRI reconstruction. Nevertheless, these approaches often fall short when compared to supervised methods. We propose Joint Supervised and Self-supervised Learning (JSSL), a novel training approach for deep learning-based MRI reconstruction algorithms aimed at enhancing reconstruction quality in cases where target datasets containing fully-sampled $k$-space measurements are unavailable. JSSL operates by simultaneously training a model in a self-supervised learning setting, using subsampled data from the target dataset(s), and in a supervised learning manner, utilizing datasets with fully-sampled $k$-space data, referred to as proxy datasets. We demonstrate JSSL's efficacy using subsampled prostate or cardiac MRI data as the target datasets, with  fully-sampled brain and knee, or brain, knee and prostate $k$-space acquisitions, respectively,  as proxy datasets. Our results showcase substantial improvements over conventional self-supervised methods, validated using common image quality metrics. Furthermore, we provide theoretical motivations for JSSL and establish ``rule-of-thumb" guidelines for training MRI reconstruction models. JSSL effectively enhances MRI reconstruction quality in scenarios where fully-sampled $k$-space data is not available, leveraging the strengths of supervised learning by incorporating proxy datasets.
\end{abstract}

% keywords can be removed
\keywords{Deep MRI Reconstruction \and Accelerated MRI \and Inverse Problems \and Self-supervised MRI Reconstruction \and Self-supervised Learning }

%%------------------------------------------------SECTION 1 ------------------------------------------------%%
\section{Introduction}
\label{sec:sec1}

Magnetic Resonance Imaging (MRI) is a widely used imaging modality in clinical practice due to its ability to non-invasively visualize detailed anatomical and physiological information inside the human body. However, the physics involved in the acquisition of MRI data, also known as the $k$-space, often makes it time-consuming, limiting its applicability in scenarios where fast imaging is essential, such as image-guided tasks. The MRI acquisition can be accelerated by acquiring subsampled $k$-space data, although this approach yields lower-quality reconstructed images with possible artifacts and aliasing \cite{zbontar2019fastmri}. 

In the past half decade, numerous state-of-the-art MRI reconstruction techniques have emerged that employ Deep Learning (DL)-based reconstruction algorithms \cite{fessler2019optimization, Pal2022-ce}. These algorithms are trained to produce high-quality images from subsampled $k$-space measurements, surpassing conventional reconstruction methods such as Parallel Imaging \cite{Pruessmann1999, Griswold2002} or Compressed Sensing \cite{1580791, 4472246}. Typically, these algorithms are trained in a fully supervised manner using retrospectively subsampled $k$-space measurements (or images) as inputs and fully-sampled $k$-space data (or images) as ground truth.

Despite the high performance of these methods, there are certain cases in clinical settings where acquiring fully-sampled datasets, essential for supervised learning (SL) training,  can be infeasible or prohibitively expensive \cite{Uecker2010, Sarma2014, Kim2021}. Such cases include MR imaging of the abdomen, cardiac cine, chest, or the prostate, where periodic or aperiodic motion can make it impossible to collect measurements adhering to the Nyquist-Shannon sampling theorem \cite{1697831}. 

In recent years, to overcome this challenge, several approaches have been proposed that train DL-based algorithms under self-supervised learning (SSL) settings, using the available (subsampled) $k$-space measurements without the need for ground truth fully-sampled data \cite{Yaman2020,Hu2021,zhou2022dsformer,cui2022selfscore,millard2023}. These methods harness self-supervised mechanisms to train models to reconstruct subsampled MRI data.

In this work, we introduce Joint Supervised and Self-supervised Learning (JSSL), a novel method for training DL-based MRI reconstruction models when ground truth fully-sampled $k$-space data for a target organ domain is unavailable for supervised training. JSSL leverages accessible fully-sampled data from \textit{proxy} dataset(s) and subsampled data from the \textit{target} dataset(s) to jointly train a model in both supervised and self-supervised manners. Our contributions can be summarized as follows:

\begin{itemize}[leftmargin=*]
    \item At the time of writing, our proposed JSSL method represents the first approach to combine supervised and self-supervised learning-based training, in proxy and target organ domains, respectively within a single pipeline in the context of accelerated DL-based MRI reconstruction.
    \item We provide a theoretical motivation for JSSL.
    \item We experimentally demonstrate that JSSL outperforms self-supervised DL-based MRI reconstruction approaches, with a specific focus on subsampled prostate and cardiac datasets.
    \item We offer practical ``rule-of-thumb" guidelines for selecting appropriate training frameworks for accelerated MRI reconstruction models.
\end{itemize}

The rest of the paper is organized as follows: \Section{sec2} reviews related work on SSL-based MRI reconstruction, which provides the background for our approach. \Section{sec3} introduces the concepts of supervised and self-supervised training for MRI reconstruction, laying the groundwork for our proposed method, along with the theoretical basis and details of our experimental setup.  \Section{sec4} presents the quantitative and qualitative results, and \Section{sec5} discusses these findings.

%%------------------------------------------------FIGURE ------------------------------------------------%%
\begin{figure}[!htb]
    \centering
    \includegraphics[width=1\columnwidth]{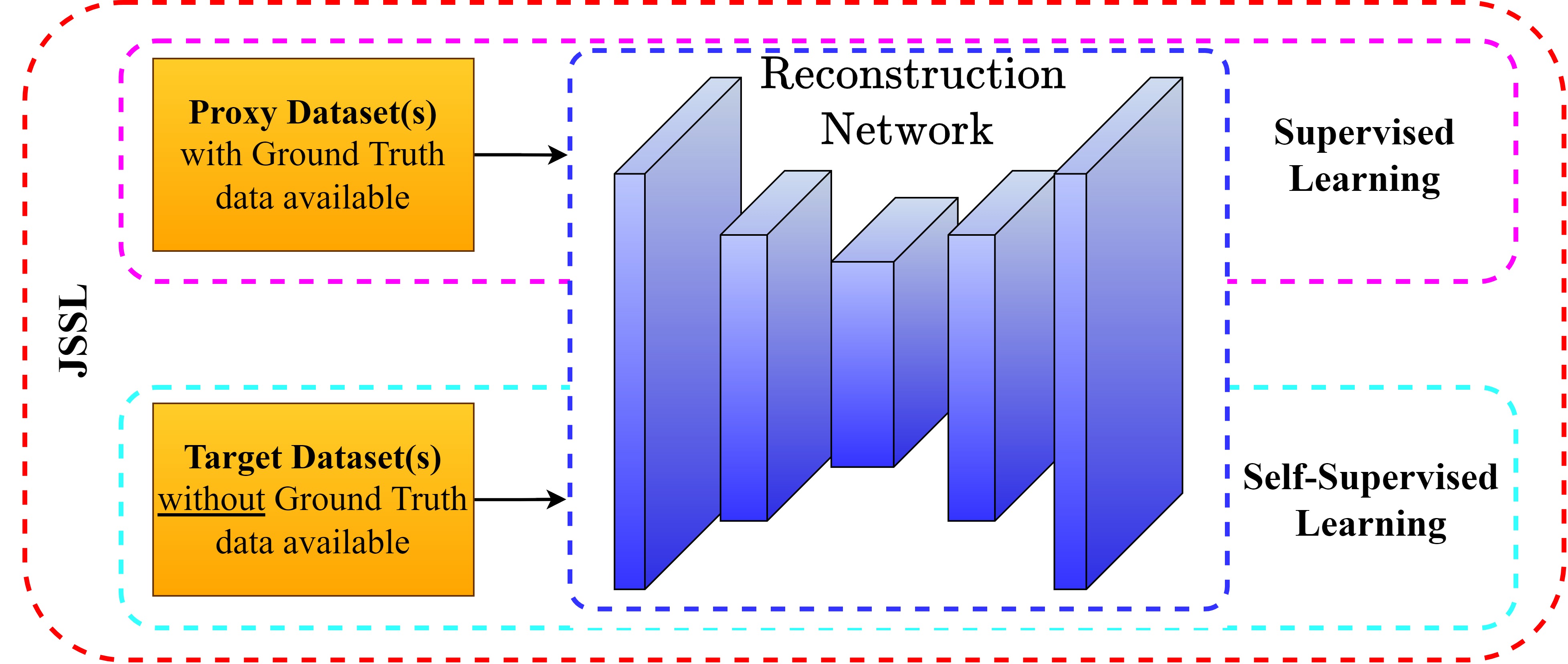}
    \caption{Overview of the JSSL framework for MRI reconstruction. Training uses ground truth data from proxy domain(s) and subsampled data (no ground truth) from a target domain, jointly in supervised and self-supervised manners, respectively.}
    \label{fig:overview}
\end{figure}
%%------------------------------------------------FIGURE ------------------------------------------------%%
%%------------------------------------------------SECTION 2 ------------------------------------------------%%

\section{Related Work}
\label{sec:sec2}
In the realm of self-supervised learning-based MRI reconstruction, among the first works introduced was SSDU (Self-supervised learning via data undersampling) \cite{Yaman2020}. SSDU, inspired by SSL concepts from deep learning, particularly Noise2Self \cite{batson2019noise2self}, proposed training a reconstruction model (ResNet CNN with conjugate gradient formulation) by partitioning the undersampled data into two subsets. One subset served as input, and the other as the target, with the loss estimated in the $k$-space domain.

An extension of SSDU was proposed in a parallel network framework \cite{Hu2021}, where two networks were trained on each partition of the subsampled $k$-space data. A consistency loss minimized the discrepancy between the two networks' outputs, allowing either network to be used during inference since both networks were trained to produce consistent results.

Further building on SSDU, \cite{millard2023} introduced a Noisier2Noise framework,  where a second subsampling mask was applied to the already subsampled $k$-space data. The employed network, E2EVarNet \cite{Sriram2020}, was trained to recover singly subsampled data from the doubly subsampled version, showing that SSDU is a special case of this broader method.  Furthermore, \cite{millard2023} provided theoretical justifications for SSDU.

In the realm of diffusion-based MRI reconstruction, a fully-sampled-data-free score-based diffusion model was proposed in \cite{cui2022selfscore}, where the model learned the prior of fully-sampled images from subsampled data in a self-supervised manner. Another diffusion-based approach, SSDiffRecon \cite{korkmaz2023self}, integrated cross-attention transformers with data-consistency blocks in an unrolled architecture. However, diffusion-based methods are outside the scope of our work.

Following the SSDU subsampled data splitting, in \cite{yan2023dc} the authors present DC-SiamNet, which employs two branches with shared weights in a Siamese architecture. Each branch reconstructs from a partition of the $k$-space data, and the training is guided by a dual-domain loss that includes image and frequency domain consistency which ensure reconstructed images/$k$-spaces are consistent across partitions, along with contrastive loss in the latent space. 

A more recent work extended SSDU by introducing SPICER, which includes coil sensitivity estimation based on autocalibration signal (ACS) data and utilizes U-Net-based models for both sensitivity estimation and reconstruction \cite{hu2024spicerselfsupervisedlearningmri}. Similar sensitivity estimation was also employed in \cite{millard2023} within the E2EVarNet framework.

Finally, SSDU has also been applied to reconstruct non-Cartesian MRI data, with the subsampled $k$-space split into disjoint parts \cite{ZHOU2022102538}. In this approach, a variational network is trained using a dual-domain loss similar to \cite{yan2023dc}: frequency consistency ensures that reconstructed $k$-spaces from each partition match the input data, while image consistency ensures that the reconstructed images are consistent across partitions. Additionally, loss is computed by comparing the reconstructed $k$-spaces and images from each partition with those generated when subsampled data is used as input.

Most self-supervised MRI reconstruction methods can be seen as derivatives or extensions of SSDU, with partitioning of undersampled data into disjoint subsets as the fundamental idea. This partitioning approach underpins the SSL component of our method, and without loss of generality, SSDU can be considered a representative method in this domain. While recent techniques have incorporated different architectures or loss functions, they largely build upon this core strategy.

Our proposed method, Joint Supervised and Self-Supervised Learning, draws inspiration from these aforementioned approaches. Like most SSL-based methods, it seeks to overcome the challenge of training without fully-sampled $k$-space data for the target organ domain. However, JSSL extends the applicability of such techniques by leveraging fully-sampled data from proxy datasets while incorporating subsampled data from the target domain. This enables joint training through both supervised and self-supervised learning, providing a practical solution for scenarios where ground truth fully-sampled data is inaccessible, yet allowing for improved reconstruction performance through the combination of proxy and target datasets.

In the broader context of combining supervised and self-supervised learning, Noise2Recon \cite{desai2021noise2recon} extended SSDU by leveraging both fully-sampled and subsampled data within a single organ domain for reconstruction and denoising, using the E2EVarNet model \cite{Sriram2020}. However, this method's dependency on fully-sampled data restricts its applicability in scenarios where such data is unavailable.

Another recent approach utilized paired fully-sampled and subsampled data from different modalities for reconstruction of the target modality \cite{zhou2022dsformer}. While SSL was employed for training, this method still relied on fully-sampled data during both training and inference, which contrasts with our approach that focuses on cases where fully-sampled data is unavailable for the target domain.

Lastly, test-time training \cite{darestani2022testtimetrainingclosenatural} is a recent method proposed to handle domain shifts in MRI reconstruction. By re-training models at inference times using a SSL data-consistency loss, it aims to adjust to shifts in data distribution between training and testing, such as moving from one scanner to another. However, this technique operates at inference time, which limits its utility in real-time imaging applications.
%%------------------------------------------------SECTION 2 ------------------------------------------------%%

%%------------------------------------------------SECTION 3 ------------------------------------------------%%
\section{Materials and Methods}
\label{sec:sec3}

\subsection{MRI Acquisition and Reconstruction}
\label{sec:sec3.1}

Assuming a fully-sampled MRI acquisition, the ground truth image $\vec{x} \in \mathbb{C}^{n}$ can be recovered from the fully-sampled multi-coil ($n_c>1$) $k$-space $\vec{y} \in \mathbb{C}^{n\times n_c}$ by applying the inverse Fast Fourier transform (FFT), denoted as $\mathcal{F}^{-1}$, followed by the root-sum-of-squares (RSS) method:

\begin{equation}
\begin{gathered}
    \vec{x} = \text{RSS} \circ \mathcal{F}^{-1} (\vec{y})=\left(\sum_{k=1}^{n_c}|\mathcal{F}^{-1}(\vec{y}^k)|^2\right)^{1/2},
    \label{eq:rss}
\end{gathered}
\end{equation}

\noindent
or alternatively, with known sensitivity maps $\vec{S}$, the SENSE operator:

\begin{equation}
\begin{gathered}
    \vec{x} =  \left|\sum_{k=1}^{n_c}{\vec{S}^k}^{*} \mathcal{F}^{-1}(\vec{y}^k)\right|.
    \label{eq:sense}
\end{gathered}
\end{equation}

\noindent
In accelerated acquisitions, the fully-sampled $k$-space is subsampled using a subsampling operator $\mat{U}_{\mat{M}}$, which selectively retains pixels present in the sampling set $\mat{M}$ and sets others to zero:

\begin{equation}
    \mat{U}_\mat{M}(\vec{y})_j = \vec{y}_j \cdot \mathbb{1}_\mat{M}(j) = \Bigg\{\begin{array}{lr}
        \vec{y}_j, & \text{if } j\in\mat{M}\\
        0, & \text{if } j \notin \mat{M}.
        \end{array}
\end{equation} 

\noindent
The forward problem of the accelerated acquisition is described by:

\begin{equation}
    \Tilde{\vec{y}}_\vec{M} = \mathcal{A}_{\vec{M}, \mat{S}} (\vec{x}) + \vec{e},
    \label{eq:forward_problem}
\end{equation}

\noindent
where $\vec{e}\in\mathbb{C}^{n\times n_c}$ represents measurement noise and $\mathcal{A}_{\vec{M}, \mat{S}}:\mathbb{C}^{n}\rightarrow\mathbb{C}^{n\times n_c}$ denotes the forward operator which maps an image to individual coil images using the coil sensitivity maps $\mat{S}$ through the expand operator $\mathcal{E}_\mat{S}$, applies the FFT, denoted by $\mathcal{F}$, and applies subsampling via $\mat{U}_{\mat{M}}$:

\begin{equation}
        \mathcal{A}_{\vec{M}, \mat{S}} (\vec{x}) = \mat{U}_{\mat{M}} \circ \mathcal{F} \circ \mathcal{E}_\mat{S}(\vec{x}) = \Big\{\mat{U}_{\mat{M}} \circ \mathcal{F} (\mat{S}^k \vec{x}) \Big \}_{k=1}^{n_c}.
        \label{eq:forward_op}
\end{equation}

\noindent
The adjoint operator of $\mathcal{A}_{\vec{M}, \mat{S}}$, denoted by $\mathcal{A}^{*}_{\vec{M}, \mat{S}}:\mathbb{C}^{n\times n_c}\rightarrow\mathbb{C}^{n}$, subsamples the input multi-coil data via $\mat{U}_{\mat{M}}$, maps them to the image domain using $\mathcal{F}^{-1}$, and reduces them to a single image using $\mat{S}$ via the reduce operator $\mathcal{R}_\mat{S}$:

\begin{equation}
     \mathcal{A}^{*}_{\vec{M}, \mat{S}}(\vec{y}) = \mathcal{R}_\mat{S}\circ \mathcal{F}^{-1} \circ \mat{U}_{\mat{M}} (\vec{y}) =\\
     \sum_{k=1}^{n_c} {\vec{S}^k}^{*} \mathcal{F}^{-1} \left(\mat{U}_{\mat{M}} (\vec{y}^k) \right).
     \label{eq:backward_op}
\end{equation}

\subsection{MRI Reconstruction}
\label{sec:subsec3.2}
Typically, the process of recovering an image from the subsampled $k$-space measurements $\Tilde{\vec{y}}_{\mat{M}}$ is formulated as a regularized least squares optimization problem:

\begin{equation}
    \vec{x}^{*} =  \arg\min_{\vec{x}^{'}}\frac{1}{2}\left|\left| \mathcal{A}_{\vec{M}, \mat{S}}(\vec{x}^{'}) - \Tilde{\vec{y}}_{\mat{M}}\right|\right|_2^2 + \mathcal{G}(\vec{x}^{'}),
\label{eq:variational_problem}
\end{equation}

\noindent
where $\mathcal{G}: \mathbb{C}^{n}\rightarrow\mathbb{R}$ represents an arbitrary regularization functional that incorporates prior reconstruction information. \Equation{variational_problem} lacks a closed-form solution, and a solution can only be obtained numerically.  

% % \subsubsection{Deep Learning-based MRI Reconstruction}
% To circumvent the need for numerical optimization, DL methods have been deployed, enabling models to learn the reconstruction process directly from data. A multitude of DL approaches with varying configurations concerning the domain of operation, architectural design, and physics-guided unrolled considerations have been developed  \cite{Liang2020, bioengineering10091012}. These methods utilize the forward and backward operators in \eqref{eq:forward_op} and \eqref{eq:backward_op} and aim to solve \eqref{eq:variational_problem} iteratively.

\subsection{MRI Reconstruction with Supervised Learning}
\label{sec:subsec3.3}
In supervised learning settings, fully-sampled $k$-space datasets are assumed to be available. Let $\mathcal{D}^{\text{SL}}= \left\{\vec{y}^{(i)}\right\}_{i=1}^{N}$ represent such a dataset, which is retrospectively subsampled during training: $\Tilde{\vec{y}}_{\mat{M}_{i}}^{(i)} = \mat{U}_{\mat{M}_{i}} (\vec{y}^{(i)})$,  and let $f_{\boldsymbol{\psi}}$ denote a DL-based reconstruction network with parameters $\boldsymbol{\psi}$.  Note that the architecture of $f_{\boldsymbol{\psi}}$ can be configured to output image reconstructions, $k$-space data, or both, but here we assume that both input and output lie in the image domain.

Throughout the paper, we use the notation $\tilde \cdot$ for subsampled measurements and $\hat \cdot$ for model predictions. The objective in SL-based MRI reconstruction is to minimize the discrepancy between the predicted and the fully-sampled $k$-spaces:
 
\begin{equation}
\begin{gathered}
    \boldsymbol{\psi}^{*} = \arg\min_{\boldsymbol{\psi}} \frac{1}{N}\sum_{i=1}^{N} \mathcal{L}_{K}\left(\vec{y}^{(i)},\hat{\vec{y}}^{(i)}\right), \\
    \hat{\vec{y}}^{(i)} = \text{DC}_{\mat{M}_{i}} \left(\Tilde{\vec{y}}_{\mat{M}_{i}}^{(i)}, \hat{\vec{y}}_{\mat{M}_{i}}^{(i)} \right),\quad
    \hat{\vec{y}}_{\mat{M}_{i}}^{(i)} = \mathcal{F} \circ \mathcal{E}_{\mat{S}}  \circ f_{\boldsymbol{\psi}}  \left( \Tilde{\Vec{x}}_{\mat{M}_{i}}^{(i)} \right),\quad \Tilde{\Vec{x}}_{\mat{M}_{i}}^{(i)} = \mathcal{A}^{*}_{\vec{M}_{i}, \mat{S}} \big(\vec{y}^{(i)}\big),
\end{gathered}
\end{equation}

or the discrepancy between the predicted image and the reconstructed fully-sampled (ground truth) image:

\begin{equation}
\begin{gathered}
    \boldsymbol{\psi}^{*} = \arg\min_{\boldsymbol{\psi}} \frac{1}{N}\sum_{i=1}^{N} \mathcal{L}_{\text{I}} \left( \vec{x}^{(i)}, \hat{\vec{x}}^{(i)} \right), \\
    \vec{x}^{(i)} =   \text{RSS} \circ \mathcal{F}^{-1} \left( \vec{y}^{(i)}  \right),\quad \hat{\vec{x}}^{(i)} =  \left| f_{\boldsymbol{\psi}} \big( \Tilde{\Vec{x}}_{\mat{M}_{i}}^{(i)} \big) \right |,
\end{gathered}
\end{equation}

where $\mathcal{L}_{K}$ and $\mathcal{L}_{\text{I}}$ denote arbitrary loss functions computed in the $k$-space and image domains, respectively. The operator DC$_\mat{M}$ denotes the data consistency operator, which enforces consistency between the available and predicted measurements and is defined as:

\begin{equation}
    \text{DC}_{\mat{M}}(\vec{w}_1, \, \vec{w}_2) = \mat{U}_{\mat{M}}(\vec{w}_1) + \mat{U}_{\mat{M}^{c}}(\vec{w}_2).
\end{equation}

\noindent
For unseen data $\Tilde{\vec{y}}^{\text{inf}}_{\mat{M}}$ a prediction is estimated as:

\begin{equation}
    \hat{\vec{x}} =  \left| f_{\boldsymbol{\psi}^{*}} \big( \Tilde{\vec{x}}^{\text{inf}}_{\mat{M}} \big) \right|, \quad \Tilde{\vec{x}}^{\text{inf}}_{\mat{M}} =  \mathcal{R}_{\mat{S}} \circ \mathcal{F}^{-1} \left( \Tilde{\vec{y}}^{\text{inf}}_{\mat{M}} \right) .
    \label{eq:sl_pred}
\end{equation}

\subsection{MRI Reconstruction with Self-supervised Learning}
\label{sec:subsec3.4}

In situations where fully-sampled $k$-space data is not available, DL models can still be trained using self-supervised learning.  Let $\mathcal{D}^{\text{SSL}} = \big\{\Tilde{\vec{y}}_{\mat{M}_{i}}^{(i)}\big\}_{i=1}^{N}$ be a dataset containing subsampled acquisitions with each instance $\Tilde{\vec{y}}_{\mat{M}_{i}}^{(i)}$ being sampled in a set $\mat{M}_{i}$. To train a reconstruction network under SSL settings, the acquired subsampled measurements are partitioned (SSDU \cite{Yaman2020}). Specifically, for each sample $\Tilde{\vec{y}}_{\mat{M}{i}}^{(i)}$, partitioning is performed by splitting the sampling set $\mat{M}_i$ into two disjoint subsets, $\mat{\Theta}_i$ and $\mat{\Lambda}_i$, and then projecting $\Tilde{\vec{y}}_{\mat{M}_{i}}^{(i)}$ onto both:

\begin{equation}
\begin{gathered}
    \mat{\Theta}_i \cup \mat{\Lambda}_i = \mat{M}_i, \quad \mat{\Theta}_i \cap \mat{\Lambda}_i = \emptyset, \\
    \Tilde{\vec{y}}_{\mat{\Theta}_{i}}^{(i)} = \mat{U}_{\mat{\Theta}_i}(\Tilde{\vec{y}}_{\mat{M}_{i}}^{(i)}),\quad \Tilde{\vec{y}}_{\mat{\Lambda}_{i}}^{(i)} = \mat{U}_{\mat{\Lambda}_i}(\Tilde{\vec{y}}_{\mat{M}_{i}}^{(i)}).
    \label{eq:ssl_partitioning}
\end{gathered}
\end{equation}

\noindent
Subsequently, one partition is used as input to the reconstruction network, while the other serves as the target. Therefore, the objective loss function is formulated in the $k$-space domain as follows:

\begin{equation}
\begin{gathered}
    \boldsymbol{\psi}^{*} = \arg\min_{\boldsymbol{\psi}} \frac{1}{N}\sum_{i=1}^{N} \mathcal{L}_{\textit{K}}\Bigg(\Tilde{\vec{y}}_{\mat{\Theta}_{i}}^{(i)},  \hat{\vec{y}}_{\mat{\Theta}_{i} \mat{\Lambda}_{i} }^{(i)}
      \Bigg),\\
     \hat{\vec{y}}_{\mat{\Theta}_{i} \mat{\Lambda}_{i} }^{(i)} = 
     \mat{U}_{\mat{\Theta}_i} \left(\text{DC}_{\boldsymbol{\Lambda}_{i}}\Big(\Tilde{\vec{y}}_{\mat{\Lambda}_{i}}^{(i)},   \hat{\Vec{y}}_{\mat{\Lambda}_{i}}^{(i)} \Big) \right) \quad 
     \hat{\Vec{y}}_{\mat{\Lambda}_{i}}^{(i)} = \mathcal{F} \circ \mathcal{E}_\mat{S} \circ  f_{\boldsymbol{\psi}}  ( \Tilde{\Vec{x}}_{\mat{\Lambda}_{i}}^{(i)} ) , \quad  \Tilde{\Vec{x}}_{\mat{\Lambda}_{i}}^{(i)} = \mathcal{A}^{*}_{\mat{\Lambda}_{i}, \mat{S}} \big(\Tilde{\vec{y}}_{\mat{M}_{i}}^{(i)}\big).
\end{gathered}
\end{equation}

\noindent
The loss can equivalently be computed in the image domain as follows:

\begin{equation}
\begin{gathered}
    \boldsymbol{\psi}^{*} = \arg\min_{\boldsymbol{\psi}} \frac{1}{N}\sum_{i=1}^{N} \mathcal{L}_{\text{I}} \left( \Tilde{\Vec{x}}_{\mat{\Theta}_{i}}^{(i)} , \hat{\vec{x}}^{(i)} \right), \\
    \Tilde{\Vec{x}}_{\mat{\Theta}_{i}}^{(i)} =   \text{RSS} \circ \mathcal{F}^{-1} \left( \Tilde{\vec{y}}_{\mat{\Theta}_{i}}^{(i)}  \right), \quad 
    \hat{\vec{x}}^{(i)} = \left|\mathcal{R}_{\mat{S}} \circ \mathcal{F}^{-1} \left( \hat{\vec{y}}_{\mat{\Theta}_{i} \mat{\Lambda}_{i} }^{(i)} \right)\right|.
\end{gathered}
\end{equation}

While most SSL-based MRI reconstruction methods rely on loss calculations in the frequency domain \cite{Yaman2020,millard2023,Hu2021,hu2024spicerselfsupervisedlearningmri}, some studies have explored dual-domain losses \cite{ZHOU2022102538,yan2023dc}. 
% Nevertheless, without loss of generality, if two $k$-spaces are close in the frequency domain, then their corresponding images (e.g., via inverse Fourier transform) should also be close in the image domain due to the unitary nature of the Fourier transform and norm equivalence in finite spaces.

\noindent
For unseen data $\Tilde{\vec{y}}^{\text{inf}}_{\mat{M}}$ in SSL settings, a prediction is estimated as outlined below:
\begin{equation}
\begin{gathered}
    \hat{\vec{x}} = \left | \mathcal{R}_{\mat{S}} \circ \mathcal{F}^{-1} \circ \text{DC}_{\mat{M}} \left (\Tilde{\vec{y}}^{\text{inf}}_{\mat{M}}, \hat{\vec{y}}^{\text{inf}}_{\mat{M}}  \right ) \right |, \\ \hat{\vec{y}}^{\text{inf}}_{\mat{M}} = \mathcal{F} \circ \mathcal{E}_{\mat{S}} \circ  f_{\boldsymbol{\psi}^{*}} \big( \Tilde{\vec{x}}^{\text{inf}}_{\mat{M}} \big), \quad
    \Tilde{\vec{x}}^{\text{inf}}_{\mat{M}}  = \mathcal{R}_{\mat{S}} \circ \mathcal{F}^{-1} \left( \Tilde{\vec{y}}^{\text{inf}}_{\mat{M}} \right).
    \label{eq:ssl_pred}
\end{gathered}
\end{equation}

%%------------------------------------------------SECTION 3 ------------------------------------------------%%
% \section{Methods}
%
% \section{Materials and methods}
% \label{sec:sec3}

\subsection{Joint Supervised and Self-supervised Learning}
\label{sec:subsec3.5}

In this section, we present the Joint Supervised and Self-supervised Learning method. JSSL is a novel approach designed to train DL-based MRI reconstruction models in scenarios where reference data is unavailable in the target organ domain. JSSL combines elements of both supervised and self-supervised learning, employing self-supervised learning using subsampled measurements from the target domain(s) (target dataset(s)) and supervised learning with one or more datasets containing fully-sampled acquisitions from other organ domains (proxy datasets). The rationale behind JSSL is to harness the knowledge transferred from the proxy datasets, potentially improving upon conventional self-supervised methods that rely solely on the (subsampled) target dataset(s) for learning. \Figure{full_overview} illustrates the end-to-end JSSL pipeline. 

\subsection{JSSL Training Framework}
\label{sec:subsec3.6}
To implement JSSL, we construct the overall loss function with two components: one for supervised learning and another for self-supervised learning. For simplicity we assume a single target and a single proxy dataset.

\paragraph{Supervised Learning Loss}
The SL loss is calculated on the proxy dataset, which contains fully-sampled $k$-space data. It is formulated as follows:

\begin{equation}
\begin{gathered}
    {\mathcal{L}}_{\boldsymbol{\psi}}^{\text{SL}}  :=  {\mathcal{L}_{\text{I}}}_{\boldsymbol{\psi}}^{\text{SL}} +  {\mathcal{L}_{K}}_{\boldsymbol{\psi}}^{\text{SL}} =    \frac{1}{N_{\text{prx}}} \sum_{i=1}^{N_{\text{prx}}} 
    \left[ \mathcal{L}_{\text{I}} \left( \vec{x}^{\text{prx}, (i)}, \hat{\vec{x}}^{\text{prx}, (i)} \right) \,
    + \, \mathcal{L}_{K}\left(\vec{y}^{\text{prx}, (i)}, \hat{\vec{y}}^{\text{prx}, (i)}\right) \right].
\label{eq:jssl_sl_loss}
\end{gathered}
\end{equation}

\noindent
Here, $\vec{x}^{\text{prx}, (i)}$ , $\hat{\vec{x}}^{\text{prx}, (i)}$ represent the ground truth and predicted images, respectively, for the $i$-th sample in the proxy dataset, while $\vec{y}^{\text{prx}, (i)}$, $\hat{\vec{y}}^{\text{prx}, (i)}$ represent the fully-sampled and predicted $k$-spaces, respectively, as defined in \Section{subsec3.3}.

\paragraph{Self-supervised Learning Loss}
The SSL loss is calculated using the target dataset, which consists of subsampled $k$-space data without ground truth.  Motivated by other SSL-based methods \cite{zhou2022dsformer,ZHOU2022102538} which established improved performance when using dual-domain loss, we calculate the SSL loss in both the image and $k$-space domains as follows:

\begin{equation}
\begin{gathered}
    {\mathcal{L}}_{\boldsymbol{\psi}}^{\text{SSL}}  :=  {\mathcal{L}_{\text{I}}}_{\boldsymbol{\psi}}^{\text{SSL}} +  {\mathcal{L}_{K}}_{\boldsymbol{\psi}}^{\text{SSL}} =  \frac{1}{N_{\text{tar}}} 
    \sum_{i=1}^{N_{\text{tar}}} \Big[\mathcal{L}_{K}\left(\Tilde{\vec{y}}_{\mat{\Theta}_{i}}^{\text{tar}, (i)}, \hat{\vec{y}}_{\mat{\Theta}_{i} \mat{\Lambda}_{i}}^{\text{tar}, (i)}\right) + \mathcal{L}_{\text{I}} \left( \Tilde{\Vec{x}}_{\mat{\Theta}_{i}}^{\text{tar}, (i)}, \hat{\vec{x}}^{\text{tar}, (i)} \right) \Big],\\
     \Tilde{\Vec{x}}_{\mat{\Theta}_{i}}^{\text{tar}, (i)} = \text{RSS} \circ \mathcal{F}^{-1} ( \Tilde{\vec{y}}_{\mat{\Theta}_{i}}^{\text{tar}, (i)}  ), \quad
     \hat{\vec{x}}^{\text{tar}, (i)} = \left| f_{\boldsymbol{\psi}} ( \Tilde{\Vec{x}}_{\mat{\Lambda}_{i}}^{\text{tar}, (i)} ) \right |,
\label{eq:jssl_ssl_loss}
\end{gathered}
\end{equation}

\noindent
where, $\Tilde{\Vec{x}}_{\mat{\Lambda}_{i}}^{\text{tar}, (i)}$, $\Tilde{\vec{y}}_{\mat{\Theta}_{i}}^{\text{tar}, (i)}$, $\hat{\vec{y}}_{\mat{\Theta}_{i} \mat{\Lambda}_{i}}^{\text{tar}, (i)}$ are as defined in \Section{subsec3.4}. 

%%------------------------------------------------FIGURE ------------------------------------------------%%

\begin{figure*}
    \centering
    \includegraphics[width=0.88\textwidth]{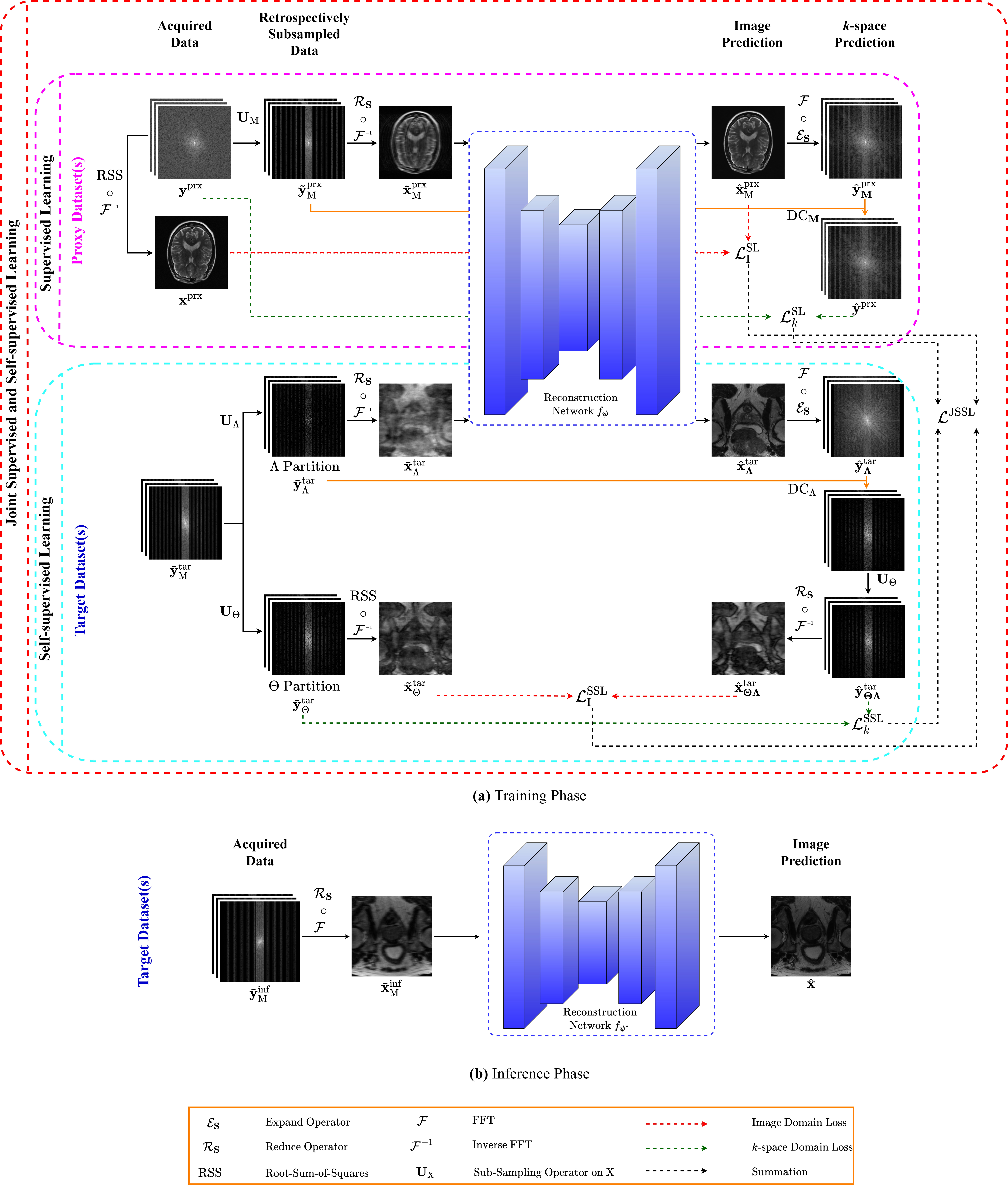}
    \caption{ \textbf{(a)} The training process for the proposed JSSL method is divided into two phases: (1) Supervised Learning using fully-sampled $k$-space data from proxy datasets. During this phase, the model is trained to predict fully-sampled data from retrospectively subsampled proxy data. (2) Self-supervised Learning utilizing subsampled $k$-space data from the target dataset, partitioned into two disjoint subsets. The model takes one subset as input and aims to predict the other. Loss functions are defined for both SL and SSL settings in both $k$-space and image domains and the JSSL loss comprises all these components. The model is jointly trained with both supervised and self-supervised loss functions  to enhance MRI reconstruction in the target domain. In all phases, sensitivity maps $\mat{S}$ are learned using the autocalibration signal (center of $k$-space) from available measurements. A U-Net, functioning as a Sensitivity Map Estimator module (omitted in the diagram)., is trained end-to-end with the reconstruction network. \textbf{(b)} In the inference phase, the trained network predicts the underlying ground truth image from the target dataset based on input subsampled data.}
    \label{fig:full_overview}
\end{figure*}
%%------------------------------------------------FIGURE ------------------------------------------------%%

\paragraph{JSSL Loss:}
The JSSL loss is a fundamental component of our approach, defined as the combination of the SL and SSL losses: $\mathcal{L}_{\boldsymbol{\psi}}^{\text{JSSL}} := {\mathcal{L}}_{\boldsymbol{\psi}}^{\text{SL}} + {\mathcal{L}}_{\boldsymbol{\psi}}^{\text{SSL}}$ and the model's parameters are updated during training such that $\boldsymbol{\psi}^{*} = \arg\min_{\boldsymbol{\psi}} \mathcal{L}_{\boldsymbol{\psi}}^{\text{JSSL}}$.

\subsection{JSSL at Inference}
\label{sec:subsec3.7}

During the inference phase, for subsampled $k$-space data $\Tilde{\vec{y}}^{\text{inf}}_{\mat{M}}$ the trained JSSL reconstruction model $f_{\boldsymbol{\psi}^{*}}$ estimates the underlying image as follows:

\begin{equation}
    \hat{\vec{x}} =  \left | f_{\boldsymbol{\psi}^{*}} \big( \Tilde{\vec{x}}^{\text{inf}}_{\mat{M}} \big)\right |, \quad \Tilde{\vec{x}}^{\text{inf}}_{\mat{M}}  =  \mathcal{R}_{\mat{S}} \circ \mathcal{F}^{-1} \left( \Tilde{\vec{y}}^{\text{inf}}_{\mat{M}} \right),
    \label{eq:jssl_inference}
\end{equation}
where $\Tilde{\vec{y}}^{\text{inf}}_{\mat{M}}$ denotes the subsampled $k$-space data.

\subsection{JSSL: A Theoretical Motivation}
\label{sec:subsec3.8}

The core concept behind JSSL is to leverage both supervised and self-supervised learning to enhance MRI reconstruction of a target dataset, even when the parameters optimized on supervised proxy tasks may not be the most optimal. We hypothesize that introducing a supervised proxy task serves as a form of regularization, reducing the variance of our estimators due to the proxy supervised training on a `less noisy' task. We illustrate this intuition with two simplified examples in Proposition \ref{prop:prop1} (estimating means of distributions) and Proposition \ref{prop:prop2} (linear regression), where we assume two distributions - one that we wish to estimate, but we cannot obtain sufficient samples from, and a proxy distribution that is directly accessible. We demonstrate that drawing samples from both distributions (or using only the proxy distribution) can reduce our estimator's variance and risk.

\begin{prop}
    Consider two distributions $p_i, \, i=1,2$ with means and variances ${\mu}_i, {\sigma}_i, \, i=1,2$, with unknown ${\mu}_1$, and ${\mu}_1 \neq {\mu}_2$. Then if $(\mu_1 - \mu_2)^2 < c \frac{{\sigma}_1^2} N$ for some $c \in (0, 1)$ and $N \in \mathbb{Z}^{+}$, then $\tilde{{x}} = \frac{1}{N+K} \sum_{i=1}^{N+K} {x}_i$ is a lower-variance estimator of $\mu_1$ compared to $\overline{{x}} = \frac{1}{N} \sum_{i=1}^N {x}_i$, where $\left\{{x}^{(i)} \sim p_1\right\}_{i=1}^{N}$ and $\left\{{x}^{(N+i)} \sim p_2 \right\}_{i=1}^{K}$ for a choice of a large $K \in \mathbb{Z}^{+} $.
\label{prop:prop1}
\end{prop}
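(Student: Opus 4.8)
The plan is to prove the statement through an exact bias–variance decomposition of the two estimators and then to use the hypothesis on the mean gap to control the behaviour as $K \to \infty$. First I would record the two building blocks. Because $\overline{x}$ averages $N$ i.i.d.\ draws from $p_1$, it is unbiased for $\mu_1$ and $\mathrm{Var}(\overline{x}) = \sigma_1^2/N$. Since the $N+K$ draws defining $\tilde{x}$ are independent, with $N$ of variance $\sigma_1^2$ and mean $\mu_1$ and $K$ of variance $\sigma_2^2$ and mean $\mu_2$, linearity gives
\[
\mathbb{E}[\tilde{x}] = \frac{N\mu_1 + K\mu_2}{N+K}, \qquad \mathrm{Var}(\tilde{x}) = \frac{N\sigma_1^2 + K\sigma_2^2}{(N+K)^2},
\]
so the bias is $\mathbb{E}[\tilde{x}] - \mu_1 = K(\mu_2-\mu_1)/(N+K)$ and, writing $\Delta^2 := (\mu_1-\mu_2)^2$, the mean squared error is
\[
\mathrm{MSE}(\tilde{x}) = \frac{K^2\Delta^2 + N\sigma_1^2 + K\sigma_2^2}{(N+K)^2}.
\]

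The substantive step is the comparison $\mathrm{MSE}(\tilde{x}) < \sigma_1^2/N = \mathrm{MSE}(\overline{x})$ for all sufficiently large $K$. Cross-multiplying by the positive denominators $(N+K)^2$ and $N$, cancelling the common $N^2\sigma_1^2$ term, and dividing by $K>0$ and then by $N$, this inequality reduces to the affine condition
\[
K\Big(\Delta^2 - \tfrac{\sigma_1^2}{N}\Big) < 2\sigma_1^2 - \sigma_2^2 .
\]
Here the hypothesis enters decisively: since $\Delta^2 < c\,\sigma_1^2/N$ with $c \in (0,1)$, the coefficient $\Delta^2 - \sigma_1^2/N$ is strictly negative, so the left-hand side tends to $-\infty$ and the inequality holds for every $K$ exceeding the explicit threshold $K_0 = \max\{0,\,(\sigma_2^2-2\sigma_1^2)/(\sigma_1^2/N-\Delta^2)\}$ (and for all $K \geq 1$ when $\sigma_2^2 \le 2\sigma_1^2$). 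Equivalently and more transparently, one may simply pass to the limit: $\mathrm{MSE}(\tilde{x}) \to \Delta^2 < c\,\sigma_1^2/N < \sigma_1^2/N$ as $K \to \infty$, and continuity in $K$ yields the claim for large $K$.

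The point I would flag as the real subtlety is that the quantity governed by the mean gap is the \emph{risk}, not the bare variance. Because $\tilde{x}$ pools in samples from $p_2$, it is biased for $\mu_1$, and its variance alone satisfies $\mathrm{Var}(\tilde{x}) = (N\sigma_1^2 + K\sigma_2^2)/(N+K)^2 \to 0$, already dropping below $\sigma_1^2/N$ for large $K$ irrespective of the means; so a literal variance comparison would not need the hypothesis at all. The condition $(\mu_1-\mu_2)^2 < c\,\sigma_1^2/N$ is precisely what prevents the squared bias from overwhelming this variance reduction in the limit, keeping the full mean squared error below that of $\overline{x}$. Thus the only genuine care needed is the bias–variance bookkeeping and the identification of $\Delta^2$ as the limiting risk of $\tilde{x}$; once that is in place the comparison collapses to the one-line affine inequality in $K$ above. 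I would therefore present the result in terms of $\mathrm{MSE}$ (risk), with the strict-variance reduction recovered as an easy by-product.
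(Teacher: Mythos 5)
Your proof is correct, and it takes a genuinely different and arguably tighter route than the paper's. The paper models the pooled sample as (approximately) $N+K$ i.i.d.\ draws from the Gaussian mixture $p_{\pi}(x) = \pi \mathcal{N}(x\,|\,\mu_1,\sigma_1^2) + (1-\pi)\mathcal{N}(x\,|\,\mu_2,\sigma_2^2)$ with $\pi = N/(N+K)$, computes the mixture's mean and variance, reads off the two expected squared errors via the bias--variance decomposition, and concludes exactly as you do by letting $K \to \infty$ (so $\pi \to 0$) and invoking $(\mu_1-\mu_2)^2 < c\,\sigma_1^2/N < \sigma_1^2/N$. Your direct computation for the actual stratified scheme (fixed counts $N$ and $K$) is exact where the paper's is not: the mixture model carries an extra variance contribution $NK(\mu_1-\mu_2)^2/(N+K)^3$ coming from the random component assignment, which the paper dismisses with ``approximately equivalent,'' and it needlessly assumes Gaussian components, whereas your argument uses only the means and variances. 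You also extract more than the paper does, namely the explicit finite threshold $K_0 = \max\{0,\,(\sigma_2^2-2\sigma_1^2)/(\sigma_1^2/N-\Delta^2)\}$ from the affine inequality $K(\Delta^2 - \sigma_1^2/N) < 2\sigma_1^2 - \sigma_2^2$ (your algebra here checks out), where the paper only passes to the limit. Finally, your observation that the comparison is genuinely one of risk rather than bare variance --- since $\mathrm{Var}(\tilde{x}) \to 0$ regardless of the means, the hypothesis on $(\mu_1-\mu_2)^2$ is needed solely to control the limiting squared bias --- is apt: the paper's proof likewise compares mean squared errors, and the proposition's phrasing ``lower-variance estimator'' elides the bias of $\tilde{x}$ exactly as you say.
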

\begin{proof}
    See Supporting Information Appendix \ref{sec:ap1}.
\end{proof}

\begin{prop}
\label{prop:prop2}
Let $\bm x \sim \mathcal N(\bm 0, \sigma^2 \bm I_{p})$ be $\mathbb R^p$-valued isotropic Gaussian random vector and $y, \tilde y$ be random variables with $p(y|\bm x) = \mathcal N(y| \bm w^T \bm x, \varepsilon^2)$ and $p(\tilde y|\bm x) = \mathcal N(\tilde y| \bm{\tilde w}^T \bm x, \tilde \varepsilon^2)$ for some $\bm w, \bm{\tilde w} \in \mathbb R^p$. Let $\mathcal T = \{(\bm x_1, \tilde y_1),\dots,  (\bm x_K, \tilde y_K)\}$ be a training data set with $K > p$ and consider a maximum likelihood estimator $\widehat y(\bm x; \mathcal T)$. Then the following holds:
\begin{enumerate}[ noitemsep, topsep=0pt]
\item $\mathrm{Bias}_{\mathcal T}[\widehat y(\bm x; \mathcal T)] = (\bm{\tilde w}^T - \bm w^T) \bm x.$
\item $\mathrm{Var}_{\mathcal T}[\widehat y(\bm x; \mathcal T)] =  \frac{\tilde \varepsilon^2}{\sigma^2 K} \| \bm x \|_2^2$.
\item $\mathbb E_{(\bm x, y)} [\widehat y(\bm x; \mathcal T) - y]^2 \leq  p \sigma^2 \| \bm {\tilde w} - \bm w \|_2^2  + \frac{p \tilde \varepsilon^2}{K}  + \varepsilon^2$
\end{enumerate}
\end{prop}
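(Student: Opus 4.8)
The plan is to first make the maximum likelihood estimator explicit. For the Gaussian linear model $p(\tilde y \mid \bm x) = \mathcal N(\tilde y \mid \bm{\tilde w}^T \bm x, \tilde \varepsilon^2)$, the MLE of the weight vector from $\mathcal T$ is ordinary least squares, $\bm{\hat w} = (\bm X^T \bm X)^{-1} \bm X^T \bm{\tilde y}$, where $\bm X \in \mathbb R^{K \times p}$ stacks the rows $\bm x_i^T$ and $\bm{\tilde y} = (\tilde y_1, \dots, \tilde y_K)^T$; the induced predictor is $\widehat y(\bm x; \mathcal T) = \bm{\hat w}^T \bm x$. Writing $\bm{\tilde y} = \bm X \bm{\tilde w} + \bm \eta$ with $\bm \eta \sim \mathcal N(\bm 0, \tilde \varepsilon^2 \bm I_K)$ gives the decomposition $\bm{\hat w} = \bm{\tilde w} + (\bm X^T \bm X)^{-1} \bm X^T \bm \eta$ that I will reuse in all three parts. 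Invertibility of $\bm X^T \bm X$ holds almost surely because $K > p$ and the design is Gaussian, hence of full rank.

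For claim~1 I would take $\mathbb E_{\mathcal T}$ of $\widehat y(\bm x; \mathcal T)$ at a fixed fresh test point $\bm x$. Conditioning on $\bm X$ and using $\mathbb E[\bm \eta \mid \bm X] = \bm 0$ kills the noise term, so $\mathbb E_{\mathcal T}[\bm{\hat w}] = \bm{\tilde w}$ and $\mathbb E_{\mathcal T}[\widehat y(\bm x; \mathcal T)] = \bm{\tilde w}^T \bm x$. Since the estimand is the true regression function $\mathbb E[y \mid \bm x] = \bm w^T \bm x$, the bias is $(\bm{\tilde w}^T - \bm w^T)\bm x$, as stated. For claim~2 I would apply the law of total variance over $\mathcal T = (\bm X, \bm \eta)$: conditioned on $\bm X$, the predictor equals $\bm{\tilde w}^T \bm x + \bm x^T (\bm X^T \bm X)^{-1} \bm X^T \bm \eta$, whose conditional mean $\bm{\tilde w}^T \bm x$ is deterministic (so the between-groups term vanishes) and whose conditional variance is $\tilde \varepsilon^2\, \bm x^T (\bm X^T \bm X)^{-1} \bm x$, leaving $\tilde \varepsilon^2\, \mathbb E_{\bm X}[\bm x^T (\bm X^T \bm X)^{-1} \bm x]$.

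This last expectation is the step I expect to be the main obstacle. The Gram matrix $\bm X^T \bm X$ is Wishart $W_p(K, \sigma^2 \bm I_p)$, and the exact inverse-Wishart mean $\mathbb E[(\bm X^T \bm X)^{-1}] = \tfrac{1}{\sigma^2 (K - p - 1)} \bm I_p$ produces the denominator $K - p - 1$ rather than the stated $K$. To obtain $\tfrac{\tilde \varepsilon^2}{\sigma^2 K} \|\bm x\|_2^2$ exactly, I would replace the empirical Gram matrix by its population mean $\mathbb E[\bm X^T \bm X] = K \sigma^2 \bm I_p$ — equivalently invoke the large-$K$ concentration $(\bm X^T \bm X)^{-1} \approx (K \sigma^2)^{-1} \bm I_p$ — which is the natural simplification for the illustrative regime $K \gg p$ intended here, and I would flag this approximation explicitly.

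For claim~3 I would use the pointwise squared-error decomposition: for fixed $\bm x$, averaging over $\mathcal T$ and over $y \sim p(y \mid \bm x)$ gives $\mathbb E_{\mathcal T, y}[(\widehat y(\bm x; \mathcal T) - y)^2 \mid \bm x] = \mathrm{Bias}^2 + \mathrm{Var}_{\mathcal T} + \varepsilon^2$, where $\varepsilon^2 = \mathrm{Var}(y \mid \bm x)$ is the irreducible noise. Averaging over $\bm x \sim \mathcal N(\bm 0, \sigma^2 \bm I_p)$, I would relax the squared bias by Cauchy--Schwarz, $((\bm{\tilde w} - \bm w)^T \bm x)^2 \le \|\bm{\tilde w} - \bm w\|_2^2\, \|\bm x\|_2^2$, and use the moment identity $\mathbb E\|\bm x\|_2^2 = p \sigma^2$ to bound it by $p \sigma^2 \|\bm{\tilde w} - \bm w\|_2^2$; this relaxation is precisely the source of the inequality in the claim, the exact average bias being the smaller $\sigma^2 \|\bm{\tilde w} - \bm w\|_2^2$. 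The same identity turns the claim-2 variance into $\tfrac{p \tilde \varepsilon^2}{K}$, and adding the irreducible $\varepsilon^2$ yields the stated bound.
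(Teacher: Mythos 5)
Your proposal follows the paper's proof step for step: the same OLS/MLE closed form $\bm{\hat w} = (\bm X^T \bm X)^{-1}\bm X^T \bm{\tilde y}$ with almost-sure invertibility from $K>p$, the same unbiasedness argument for claim 1, the same reduction of the claim-2 variance to $\tilde\varepsilon^2\,\mathbb E[\bm x^T(\bm X^T\bm X)^{-1}\bm x]$ by conditioning on $\bm X$ (the paper gets there via the cyclic trace trick rather than the law of total variance, an immaterial difference), and the same bias--variance decomposition with Cauchy--Schwarz and $\mathbb E\|\bm x\|_2^2 = p\sigma^2$ for claim 3. The one point of divergence is precisely the step you flagged as the obstacle, and there you are more careful than the paper. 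The paper treats $\mathbb E[(\bm X^T\bm X)^{-1}]$ as exactly $\bm I_p/(\sigma^2 K)$, asserting that $(\bm X^T\bm X)^{-1}$ is inverse-Wishart $\mathcal W^{-1}_p(\sigma^{-2}\bm I_p, K+p+1)$ with that mean; this conflates two parametrization conventions. Under the standard convention, $\bm X^T\bm X \sim \mathcal W_p(\sigma^2\bm I_p, K)$ has inverse with mean $\bm I_p/(\sigma^2(K-p-1))$ for $K>p+1$ (for $p=1$ this is the familiar $\mathbb E[1/\chi^2_K]=1/(K-2)$), exactly as you state; under the shifted degrees-of-freedom convention in which the inverse carries parameter $K+p+1$, the mean formula shifts correspondingly and one still obtains $K-p-1$. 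So claim 2 as printed is exact only in the $K\gg p$ regime you invoke (or after replacing $K$ by $K-p-1$), and your explicit flagging of the approximation corrects an error in the paper's proof rather than revealing a weakness in yours. Note that the discrepancy propagates to claim 3: with the exact variance, the stated $p\tilde\varepsilon^2/K$ term can fail to dominate (e.g.\ when $\bm{\tilde w}=\bm w$, where the factor-$p$ slack you correctly identify in the bias term provides no help), while the bound holds with $K-p-1$ in the denominator or asymptotically as $K\to\infty$, which is the regime the proposition is meant to illustrate.
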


\begin{proof}
    See Supporting Information Appendix \ref{sec:ap1}.
\end{proof}

Propositions \ref{prop:prop1} and \ref{prop:prop2} imply that leveraging a large number of samples from the proxy distribution ($K \to \infty$) can lead to a significant reduction in the variance of estimators trained under both supervised and self-supervised learning paradigms. Moreover, it highlights how the introduction of bias through supervised learning can be a strategic trade-off to lower variance. Additionally, Proposition \ref{prop:prop2} sheds light on how the risk associated with our estimator can be influenced by the degree of similarity between the target and proxy distributions.

\subsection{Reconstruction Network}
\label{sec:subsec3.9}

To circumvent the need for numerical optimization, DL methods have been deployed, enabling models to learn the reconstruction process directly from data. A multitude of DL approaches with varying configurations concerning the domain of operation, architectural design, and physics-guided unrolled considerations have been developed  \cite{Liang2020, bioengineering10091012}. These methods utilize the forward and backward operators in \eqref{eq:forward_op} and \eqref{eq:backward_op} and aim to solve \eqref{eq:variational_problem} iteratively. They have demonstrated significant reconstruction capabilities in both supervised and self-supervised approaches \cite{Yaman2020, Korkmaz2023, Hammernik2017, Yiasemis_2022_CVPR}.

In this study, we utilized the Variable Splitting Half-quadratic ADMM algorithm for Reconstruction of Inverse Problems (vSHARP) \cite{yiasemis2023vsharp}, a physics-guided deep learning method unrolled across iterations, previously applied in accelerated brain, prostate and cardiac MRI reconstruction \cite{yiasemis2023vsharp, yiasemis2023deep}. The vSHARP algorithm uses the half-quadratic variable splitting technique and the Alternating Direction Method of Multipliers (ADMM) over $T$ iterations to iteratively solve the optimization problem in \eqref{eq:variational_problem}. Each iteration includes three key steps: (a) denoising the auxiliary variable introduced by variable splitting using a deep learning-based denoiser, (b) updating the target image data consistency via an unrolled gradient descent scheme over $T_{\vec{x}}$ iterations, and (c) updating the Lagrange multipliers. Additional details can be found in the Supporting Information Appendix \ref{sec:ap2}.

While vSHARP serves as the primary model in our experiments, it is important to emphasize that training frameworks are model-agnostic. In \Section{subsubsec3.11.5}, we introduce two additional architectures to highlight this versatility.

\subsection{Coil Sensitivity Prediction}
\label{sec:subsec3.10}

The initial approximation of coil sensitivity maps is derived from the autocalibration signal (ACS), specifically the center of the $k$-space\cite{https://doi.org/10.1002/mrm.10087}. While SSL-based approaches such as \cite{Yaman2020,desai2021noise2recon,ZHOU2022102538} use this initial approximation or employ expensive algorithms like Espirit \cite{Uecker2013}, our JSSL approach takes this initial estimation and feeds it as input to a Sensitivity Map Estimator (SME) similarly to \cite{millard2023,pmlr-v227-zhang24a,hu2024spicerselfsupervisedlearningmri}. The SME is a DL-based model designed to enhance and refine the sensitivity profiles and it is trained end-to-end in conjunction with the reconstruction model. Note that we integrate a SME module in all our experiments in \Section{subsec3.11}.

%%------------------------------------------------SECTION 4 ------------------------------------------------%%

\subsection{Experiments}
\label{sec:subsec3.11}
All experiments were conducted using the Deep Image Reconstruction Toolkit (DIRECT) \cite{DIRECTTOOLKIT}. The complete codebase, including data loading, processing functions, and models, is available at \url{https://github.com/NKI-AI/direct}.

\subsubsection{Datasets}
\label{sec:subsubsec3.11.1}
We utilized several datasets that contain fully-sampled multi-coil $k$-space data: the fastMRI brain \cite{zbontar2019fastmri}, fastMRI knee \cite{zbontar2019fastmri}, fastMRI prostate T2 \cite{tibrewala2023fastmri}, and CMRxRecon challenge 2023 cardiac cine MRI \cite{cmrxrecon,cmrxrecondataset} datasets. The characteristics of these datasets and the parameters for data splitting are summarized in Supporting Information Table \ref{tab:S1}.

To evaluate the performance of JSSL, we selected two different combinations of target and proxy datasets for our experiments:

\begin{enumerate}[leftmargin=*,label=(\Alph*)]
    \item \textbf{Target:} Prostate dataset. \textbf{Proxy:} Brain and knee datasets.
    \item \textbf{Target:} Cardiac dataset. \textbf{Proxy:} Brain, knee, and prostate datasets.
\end{enumerate}

During training, the fully-sampled data from the target datasets were retrospectively subsampled. Fully-sampled target data were used exclusively for evaluation during inference. Similarly, the proxy data were retrospectively subsampled for training purposes, with fully-sampled measurements used for loss calculation.  Due to the substantial difference in the number of training samples between the proxy and target datasets, we implemented an oversampling strategy on the training proxy data, repeating each sample twice.

\subsubsection{Subsampling Schemes}
\label{sec:subsubsec3.11.2}
For our experiments, we used a random uniform Cartesian subsampling scheme for the brain and an equispaced Cartesian subsampling scheme for the knee measurements, following the corresponding publication \cite{zbontar2019fastmri}. For the prostate data, we enforced an equispaced subsampling scheme, as it's one of the easiest and fastest to implement on scanners, suitable for prostate imaging. For the cardiac dataset, we utilized the provided equispaced-like schemes by the challenge \cite{cmrxrecon}.

The subsampling process during training involved randomly selecting accelerations of $R=$ 4, 8, 12 (only for \textbf{A}), or 16 (only for \textbf{B}). Specifically, for an acceleration factor of 4, 8\% of the fully-sampled data were retained as ACS lines (center of $k$-space). Similarly, for acceleration factors of 8, 12, and 16, the corresponding percentages of ACS lines were 4\%, 3\%, and 2\% of the fully-sampled data, respectively, as  described in \cite{yiasemis2023retrospective}. During inference, our methods were tested under acceleration factors of $R =$ 2, 4, 8, 12 (only for \textbf{A}) and 16 (only for \textbf{B}), with ACS percentages of 16\%, 8\%, 4\%, 3\%, and 2\%, respectively.

\subsubsection{SSL Subsampling Partitioning}
\label{sec:subsubsec3.11.3}
During the training of any SSL-based method, including JSSL, in our experiments, the subsampled data underwent partitioning into two distinct sets, as elaborated in  \Section{subsec3.4}. To achieve this, 
$\mat{\Theta}_{i}$ was obtained  by selecting elements from ${\mat{M}_{i}}$ via a 2D Gaussian scheme with a standard deviation of 3.5 pixels. This choice is backed by literature suggesting Gaussian outperforms uniform partitioning \cite{Yaman2020}, although identifying optimal partitioning schemes is out of the scope of this work. Consequently, we set ${\mat{\Lambda}_{i}} = {\mat{M}_{i} \setminus \mat{\Theta}_{i}}$ (more information is provided in Supporting Information Appendix \ref{sec:ap2}). Furthermore, the ratio $q_i = \frac{|\mat{\Theta}_{i}|}{|\mat{M}_{i}|}$ was randomly selected between 0.3 and 0.8. An illustrative example of this is provided in Supporting Information Figure \ref{fig:S1}. Note that a $w \times w = 4 \times 4$ window in the center of the ACS region was included in each $\mat{\Lambda}_{i}$ to facilitate effective training of the SME module.

\subsubsection{Implementation \& Optimization}
\label{sec:subsubsec3.11.4}
\paragraph{Model Architecture} 

In all our experiments, we adopted vSHARP with $T=12$ optimization steps, utilizing two-dimensional U-Nets \cite{Ronneberger2015} composed of 4 scales and 32 filters (in the first scale) for $\left\{\mathcal{H}_{\boldsymbol{\theta}_{t}}\right\}_{t=0}^{T-1}$. For the data consistency step, we set $T_{\vec{x}}=10$. For the SME module we employed a 2D U-Net with 4 scales and 16 filters in the first scale.

\paragraph{Parameter Optimization}
We optimized the model parameters using the Adam optimizer \cite{kingma2017adam}, with parameters $\epsilon=10^{-8}$, $(\beta_1, \beta_2) = (0.99, 0.999)$ and initial learning rate (lr) set to 0.003. We also employed a lr scheduler which decayed the lr by a factor of 0.8 every 150,000 training iterations. Our experiments were carried out on two A6000 RTX GPUs, with a batch size of 2 slices of multi-coil $k$-space data assigned to each GPU. All models were trained to convergence.

\paragraph{Choice of Loss Function}
\label{sec:para3.7.4.3}
In all our experiments, loss was computed as detailed in \Section{subsec3.6} employing the following:
\begin{equation*}
\begin{gathered}
    {\mathcal{L}_{\text{I}}}^{\text{SL}},\, {\mathcal{L}_{\text{I}}}^{\text{SSL}} := 2 \left(1 - \text{SSIM}  + \mathcal{L}_{1}\right) + \text{HFEN}_1 + \text{HFEN}_2,\\
    {\mathcal{L}_{K}}^{\text{SL}}, \, {\mathcal{L}_{K}}^{\text{SSL}} := 2 \left( \text{NMSE} + \text{NMAE} \right).
\end{gathered}
\end{equation*}
\noindent
For brevity, we have omitted the definitions of the individual components of these loss functions. Comprehensive details can be found in the Supporting Information Appendix \ref{sec:ap2}.

\subsubsection{Training Setups Comparison}
\label{sec:subsubsec3.11.5}
Here, we present the comparisons that were conducted to evaluate JSSL for accelerated MRI reconstruction. All of our experiments were evaluated on the test sets of the target datasets, with the aim of assessing the performance of each strategy. We performed the following experiments:
\begin{enumerate}[label=(\arabic*)]
    \item SSL in the target domain.
    \item SSL in both the target and proxy domains (SSL ALL).
    \item SL in the target domain.
    \item SL in both the proxy and target domains (SL ALL).
    \item SL in proxy domains only - out-of-distribution inference (SL Proxy).
    \item JSSL (SL in proxy domain, SSL in target domain).
\end{enumerate}

Our principal objective throughout these comparisons was to examine the performance of JSSL in relation to SSL training approaches, since we are interested in scenarios where there is no access to fully-sampled data in the target domain. To demonstrate that JSSL's superiority does not solely stem from the larger dataset size (more data introduced from the proxy datasets), we also conducted experiments using all available data using a SSL strategy, incorporating both the target and proxy datasets. SL-based experiments served as reference, although naturally, the results are expected to favor SL methods when fully-sampled data is accessible in the target domain.

\paragraph{Robustness to Model Choice Experiments}
To underscore that our comparative analysis results were not dependent on specific architectural choices, we performed additional experiments with different deep MRI reconstruction networks. These included a traditional deep MRI reconstruction network, a U-Net  \cite{Ronneberger2015} operating in the image domain, as well as a state-of-the-art physics-guided network, the End-to-End Variational Network (E2EVarNet) \cite{Sriram2020}. We repeated the JSSL and SSL-based comparison experiments described in the previous sections.

Both models underwent training and evaluation on data subsampled with acceleration factors of 4, 8, and 16, with ACS ratios of 8\%, 4\%, and 2\% of the data shape. The selection of hyper-parameters for JSSL and SSL, choices for proxy and target datasets, and data splits were consistent with the comparative experiments presented in the previous sub-sections. Detailed information on the hyper-parameter selection and experimental settings for these architectures can be found in Supporting Information Table \ref{tab:S2} and Supporting Information Appendix \ref{sec:ap3}.

\subsubsection{Alternative Configurations Studies}
\label{sec:subsubsec3.11.6}
To investigate JSSL further and evaluate its performance under different settings, we examine additional configurations for the JSSL and SSL setups. In particular, we perform the following experiments on the first set of experiments (prostate data as target):
\begin{enumerate}[label=(\arabic*)]
    \item JSSL and SSL in all domains by oversampling $10$-fold the target dataset during training to balance better proxy and target data, in comparison to $2$-fold in our original experiments in \Section{subsubsec3.11.5}.
    \item JSSL and SSL using a constant partitioning ratio of $q=0.5$ instead of $q\in(0.3-0.8)$ as in  \Section{subsubsec3.11.5}.
    \item JSSL and SSL setting for the ACS window $w\times w=10 \times 10$ opposed to $w\times w=4\times 4$ in our experiments in \Section{subsubsec3.11.5}.
\end{enumerate}

\subsubsection{Statistical Testing}
\label{sec:subsubsec3.11.7}

To determine whether the top-performing method in each category (SL methods, SSL-based methods including JSSL, SSL-based methods with different configurations) significantly outperformed the others, we conducted statistical tests. Initially, we calculated the differences in performance between the best method and the other methods within each category. The Shapiro-Wilk test \cite{SHAPIRO1965} was used to assess the normality of these differences. If the differences were normally distributed ($p > \alpha $), a paired t-test was performed, alternatively the Wilcoxon signed-rank test \cite{conover1999practical} was used. In our reported results we denote with an asterisk instances which the average best method was not found to be statistically significantly better ($p > \alpha$). Note that we set $\alpha = 0.05$ as the significance level.

\subsubsection{Evaluation}
\label{sec:subsubsec3.11.8}

To assess the results of our experiments, we employed three key metrics: the Structural Similarity Index Measure (SSIM), peak Signal-to-Noise Ratio (pSNR), and normalized mean squared error (NMSE). Metrics were calculated by comparing model outputs with the RSS ground truth reconstructions, as detailed in \eqref{eq:rss}. The metric definitions were consistent with \cite{yiasemis2023retrospective}. The selection of optimal model checkpoints was based on their performance on the validation set.

%%------------------------------------------------SECTION 5 ------------------------------------------------%%
\section{Results}
\label{sec:sec4}

% \subsection{Evaluation}
% To assess the results of our experiments, we employed three key metrics: the Structural Similarity Index Measure (SSIM), peak Signal-to-Noise Ratio (pSNR), and normalized mean squared error (NMSE). Metrics were calculated by comparing model outputs with the RSS ground truth reconstructions, as detailed in \eqref{eq:rss}. The metric definitions were consistent with [\citenum{yiasemis2023retrospective}]. The selection of optimal model checkpoints was based on their performance on the validation set.

\subsection{Training Setups Comparison}
\label{sec:subsec4.1}
%%%%%%%%%%%%%%%%%%%%%%%%%%%%%%%FIGURE%%%%%%%%%%%%%%
\begin{figure*}[!hbt]
    \centering
    \includegraphics[width=1\textwidth]{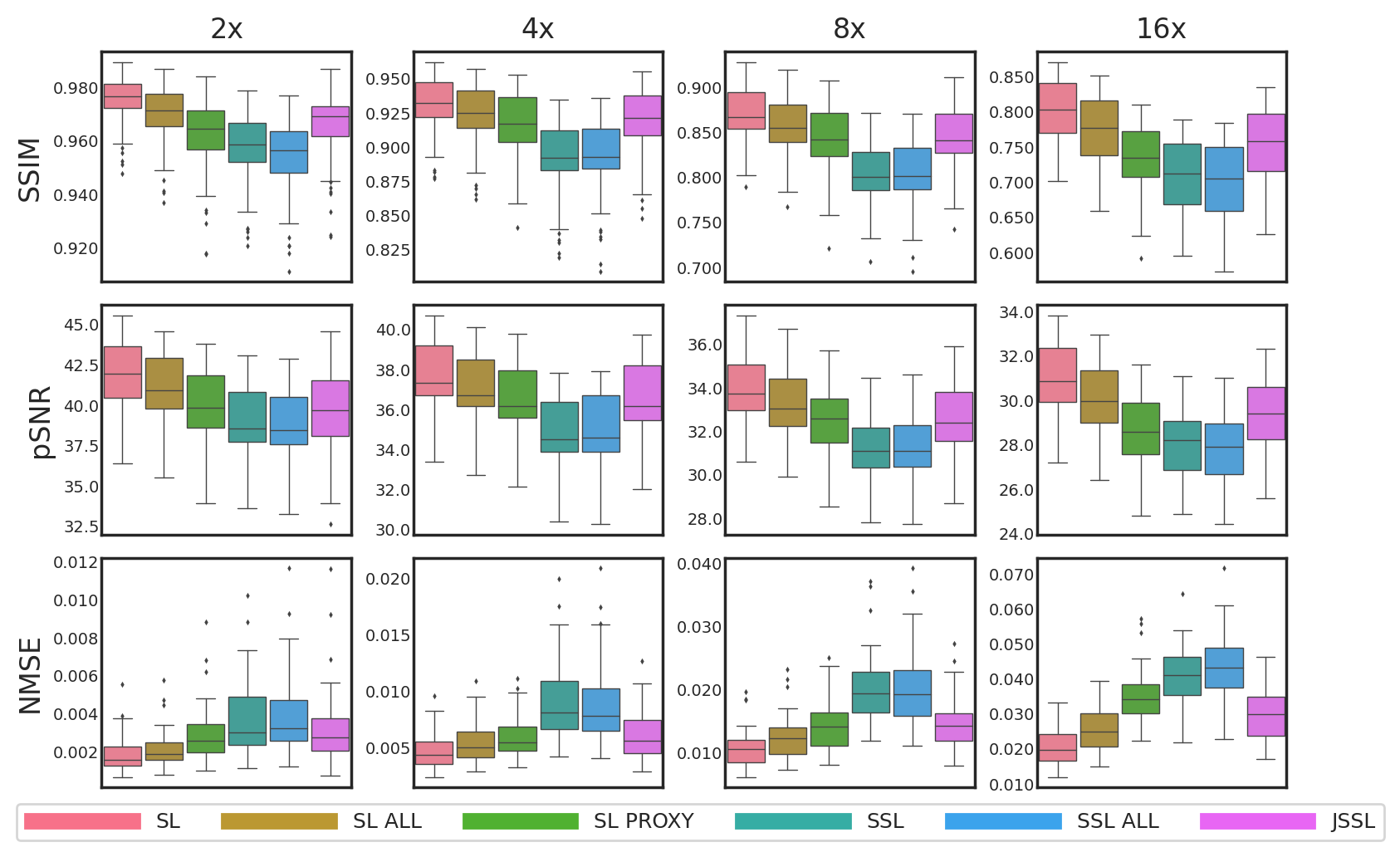}
    \caption{Evaluation results for different training setups  with the prostate as target dataset, and the brain and knee datasets as proxy datasets}
    \label{fig:metrics}
\end{figure*}
%%%%%%%%%%%%%%%%%%%%%%%%%%%%%%%FIGURE%%%%%%%%%%%%%%

%%%%%%%%%%%%%%%%%%%%%%%%%%%%%%%FIGURE%%%%%%%%%%%%%%
\begin{figure*}[!hbt]
    \centering
    \includegraphics[width=1\textwidth]{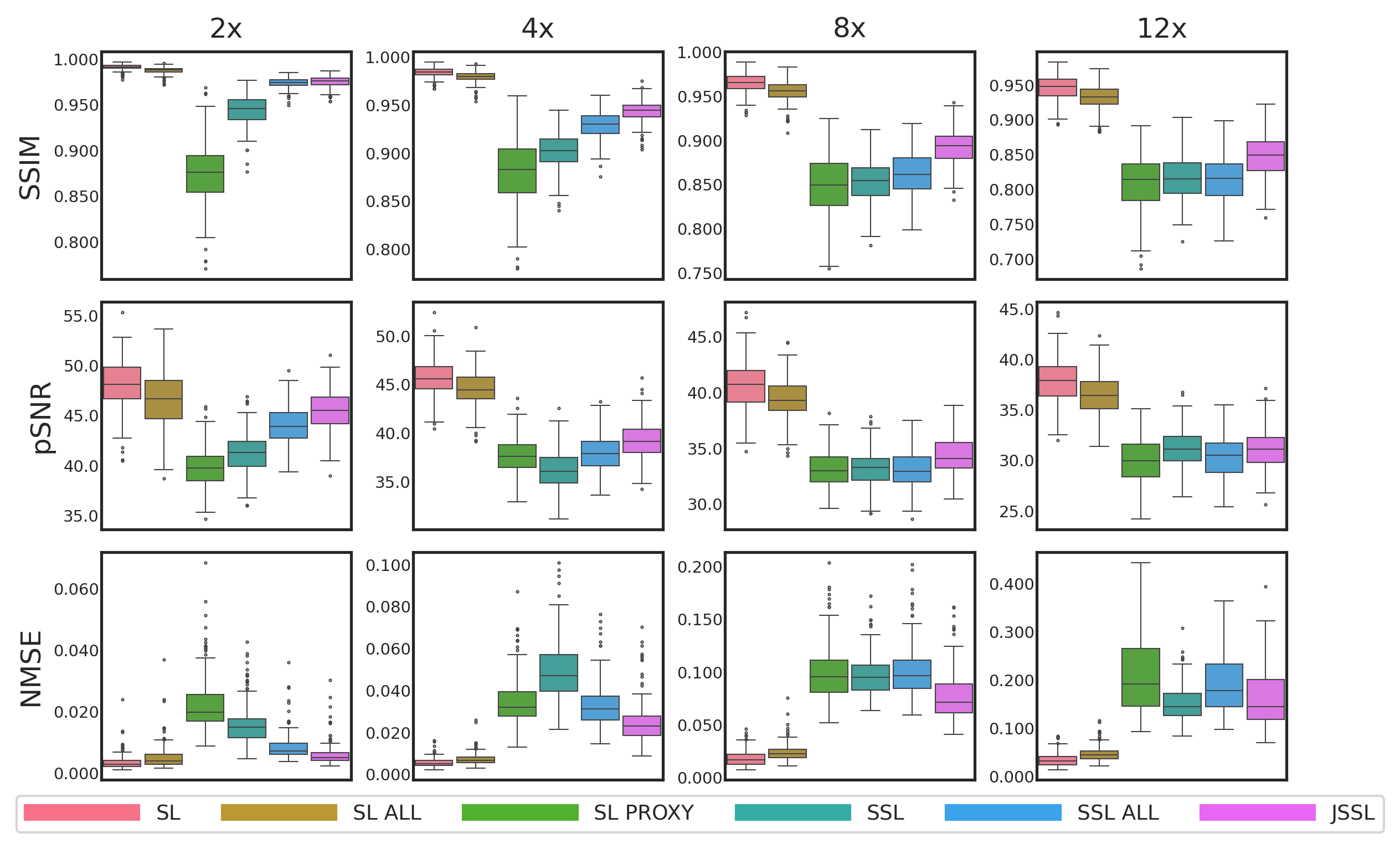}
    \caption{Evaluation results for different training setups  with the cardiac as target dataset, and the brain, knee and prostate datasets as proxy datasets}
    \label{fig:metrics-cine}
\end{figure*}
%%%%%%%%%%%%%%%%%%%%%%%%%%%%%%%FIGURE%%%%%%%%%%%%%%

%%------------------------------------------------TABLE ------------------------------------------------%%

\begin{table*}[!htb]
\centering
\caption{Average metric results across different training setups with the fastMRI prostate as target dataset and the brain and knee datasets as proxy datasets. 
% An asterisk ($*$) denotes that the average best method (bold) was not found to be statistically significantly better than the corresponding method.
}
\setlength{\tabcolsep}{0.8pt}
{\renewcommand{\arraystretch}{2}
\resizebox{1\textwidth}{!}{%
\begin{tabular}{ccccccccccccc}
\hline
\multirow{2}{*}{\textbf{Setup}} &
  \multicolumn{3}{c}{\textbf{2x}} &
  \multicolumn{3}{c}{\textbf{4x}} &
  \multicolumn{3}{c}{\textbf{8x}} &
  \multicolumn{3}{c}{\textbf{16x}} \\ \cline{2-13} 
 &
  SSIM &
  pSNR &
  NMSE &
  SSIM &
  pSNR &
  NMSE &
  SSIM &
  pSNR &
  NMSE &
  SSIM &
  pSNR &
  NMSE \\ \hline
SL &
  \textbf{0.974${\scriptscriptstyle\pm 0.010}$} &
  \textbf{41.8${\scriptscriptstyle\pm 2.3}$} &
  \textbf{0.002${\scriptscriptstyle\pm 0.001}$} &
  \textbf{0.930${\scriptscriptstyle\pm 0.022}$} &
  \textbf{37.5${\scriptscriptstyle\pm 1.8}$} &
  \textbf{0.005${\scriptscriptstyle\pm 0.002}$} &
  \textbf{0.868${\scriptscriptstyle\pm 0.033}$} &
  \textbf{33.9${\scriptscriptstyle\pm 1.6}$} &
  \textbf{0.011${\scriptscriptstyle\pm 0.003}$} &
  \textbf{0.799${\scriptscriptstyle\pm 0.045}$} &
  \textbf{31.0${\scriptscriptstyle\pm 1.6}$} &
  \textbf{0.021${\scriptscriptstyle\pm 0.005}$} \\
SL ALL &
  0.969${\scriptscriptstyle\pm 0.012}$ &
  41.1${\scriptscriptstyle\pm 2.3}$ &
  0.002${\scriptscriptstyle\pm 0.001}$ &
  0.922${\scriptscriptstyle\pm 0.024}$ &
  36.9${\scriptscriptstyle\pm 1.8}$ &
  0.005${\scriptscriptstyle\pm 0.002}$ &
  0.854${\scriptscriptstyle\pm 0.035}$ &
  33.2${\scriptscriptstyle\pm 1.5}$ &
  0.013${\scriptscriptstyle\pm 0.003}$ &
  0.771${\scriptscriptstyle\pm 0.049}$ &
  30.0${\scriptscriptstyle\pm 1.6}$ &
  0.026${\scriptscriptstyle\pm 0.006}$ \\
SL PROXY &
  0.961${\scriptscriptstyle\pm 0.016}$ &
  39.8${\scriptscriptstyle\pm 2.4}$ &
  0.003${\scriptscriptstyle\pm 0.002}$ &
  0.914${\scriptscriptstyle\pm 0.026}$ &
  36.4${\scriptscriptstyle\pm 1.8}$ &
  0.006${\scriptscriptstyle\pm 0.002}$ &
  0.839${\scriptscriptstyle\pm 0.041}$ &
  32.5${\scriptscriptstyle\pm 1.7}$ &
  0.015${\scriptscriptstyle\pm 0.004}$ &
  0.733${\scriptscriptstyle\pm 0.051}$ &
  28.6${\scriptscriptstyle\pm 1.5}$ &
  0.035${\scriptscriptstyle\pm 0.008}$ \\ \hline
SSL &
  0.956${\scriptscriptstyle\pm 0.015}$ &
  38.8${\scriptscriptstyle\pm 2.6}$ &
  0.004${\scriptscriptstyle\pm 0.002}$ &
  0.891${\scriptscriptstyle\pm 0.030}$ &
  34.7${\scriptscriptstyle\pm 2.0}$ &
  0.009${\scriptscriptstyle\pm 0.003}$ &
  0.801${\scriptscriptstyle\pm 0.038}$ &
  31.1${\scriptscriptstyle\pm 1.5}$ &
  0.020${\scriptscriptstyle\pm 0.005}$ &
  0.707${\scriptscriptstyle\pm 0.050}$ &
  28.0${\scriptscriptstyle\pm 1.6}$ &
  0.041${\scriptscriptstyle\pm 0.008}$ \\
SSL ALL &
  0.953${\scriptscriptstyle\pm 0.016}$ &
  38.6${\scriptscriptstyle\pm 2.5}$ &
  0.004${\scriptscriptstyle\pm 0.002}$ &
  0.892${\scriptscriptstyle\pm 0.031}$ &
  34.8${\scriptscriptstyle\pm 2.0}$ &
  0.009${\scriptscriptstyle\pm 0.004}$ &
  0.801${\scriptscriptstyle\pm 0.041}$ &
  31.1${\scriptscriptstyle\pm 1.6}$ &
  0.020${\scriptscriptstyle\pm 0.006}$ &
  0.699${\scriptscriptstyle\pm 0.052}$ &
  27.8${\scriptscriptstyle\pm 1.6}$ &
  0.043${\scriptscriptstyle\pm 0.010}$ \\
JSSL &
  \textbf{0.965${\scriptscriptstyle\pm 0.015}$} &
  \textbf{39.5${\scriptscriptstyle\pm 2.8}$} &
  \textbf{0.003${\scriptscriptstyle\pm 0.002}$} &
  \textbf{0.918${\scriptscriptstyle\pm 0.026}$} &
  \textbf{36.4${\scriptscriptstyle\pm 1.9}$} &
  \textbf{0.006${\scriptscriptstyle\pm 0.002}$} &
  \textbf{0.842${\scriptscriptstyle\pm 0.038}$} &
  \textbf{32.5${\scriptscriptstyle\pm 1.6}$} &
  \textbf{0.015${\scriptscriptstyle\pm 0.004}$} &
  \textbf{0.752${\scriptscriptstyle\pm 0.053}$} &
  \textbf{29.3${\scriptscriptstyle\pm 1.6}$} &
  \textbf{0.030${\scriptscriptstyle\pm 0.007}$} \\ \hline
\end{tabular}%
}}

\label{tab:metrics}
\end{table*}
%%------------------------------------------------TABLE ------------------------------------------------%%

%%------------------------------------------------TABLE ------------------------------------------------%%

\begin{table*}[!htb]
\centering
\caption{Average metric results across different training setups with the CRMxRecon cardiac data as target dataset and the brain, knee and prostate datasets as proxy datasets.
% An asterisk ($*$) denotes that the average best method (bold) was not found to be statistically significantly better than the corresponding method.
}
\resizebox{\textwidth}{!}{%
\setlength{\tabcolsep}{0.8pt}
{\renewcommand{\arraystretch}{2}
\begin{tabular}{ccccccccccccc}
\hline
\multirow{2}{*}{\textbf{Setup}} &
  \multicolumn{3}{c}{\textbf{2x}} &
  \multicolumn{3}{c}{\textbf{4x}} &
  \multicolumn{3}{c}{\textbf{8x}} &
  \multicolumn{3}{c}{\textbf{12x}} \\ \cline{2-13} 
 &
  SSIM &
  pSNR &
  NMSE &
  SSIM &
  pSNR &
  NMSE &
  SSIM &
  pSNR &
  NMSE &
  SSIM &
  pSNR &
  NMSE \\ \hline
SL &
  \textbf{0.991${\scriptscriptstyle\pm 0.003}$} &
  \textbf{48.1${\scriptscriptstyle\pm 2.5}$} &
  \textbf{0.004${\scriptscriptstyle\pm 0.003}$} &
  \textbf{0.984${\scriptscriptstyle\pm 0.005}$} &
  \textbf{45.7${\scriptscriptstyle\pm 2.0}$} &
  \textbf{0.006${\scriptscriptstyle\pm 0.002}$} &
  \textbf{0.965${\scriptscriptstyle\pm 0.011}$} &
  \textbf{40.6${\scriptscriptstyle\pm 2.2}$} &
  \textbf{0.018${\scriptscriptstyle\pm 0.007}$} &
  \textbf{0.946${\scriptscriptstyle\pm 0.018}$} &
  \textbf{37.8${\scriptscriptstyle\pm 2.3}$} &
  \textbf{0.035${\scriptscriptstyle\pm 0.015}$} \\
SL ALL &
  0.987${\scriptscriptstyle\pm 0.004}$ &
  46.5${\scriptscriptstyle\pm 2.6}$ &
  0.005${\scriptscriptstyle\pm 0.004}$ &
  0.979${\scriptscriptstyle\pm 0.006}$ &
  44.5${\scriptscriptstyle\pm 1.9}$ &
  0.007${\scriptscriptstyle\pm 0.003}$ &
  0.956${\scriptscriptstyle\pm 0.012}$ &
  39.4${\scriptscriptstyle\pm 1.9}$ &
  0.024${\scriptscriptstyle\pm 0.008}$ &
  0.932${\scriptscriptstyle\pm 0.019}$ &
  36.5${\scriptscriptstyle\pm 2.0}$ &
  0.047${\scriptscriptstyle\pm 0.016}$ \\
SL PROXY &
  0.875${\scriptscriptstyle\pm 0.037}$ &
  39.8${\scriptscriptstyle\pm 2.0}$ &
  0.022${\scriptscriptstyle\pm 0.009}$ &
  0.880${\scriptscriptstyle\pm 0.035}$ &
  37.6${\scriptscriptstyle\pm 2.0}$ &
  0.036${\scriptscriptstyle\pm 0.012}$ &
  0.848${\scriptscriptstyle\pm 0.034}$ &
  33.1${\scriptscriptstyle\pm 1.7}$ &
  0.099${\scriptscriptstyle\pm 0.027}$ &
  0.810${\scriptscriptstyle\pm 0.041}$ &
  30.0${\scriptscriptstyle\pm 2.2}$ &
  0.211${\scriptscriptstyle\pm 0.079}$ \\ \hline
SSL &
  0.944${\scriptscriptstyle\pm 0.017}$ &
  41.2${\scriptscriptstyle\pm 2.1}$ &
  0.016${\scriptscriptstyle\pm 0.007}$ &
  0.902${\scriptscriptstyle\pm 0.020}$ &
  36.2${\scriptscriptstyle\pm 2.0}$ &
  0.049${\scriptscriptstyle\pm 0.014}$ &
  0.854${\scriptscriptstyle\pm 0.025}$ &
  33.2${\scriptscriptstyle\pm 1.7}$ &
  0.097${\scriptscriptstyle\pm 0.020}$ &
  0.817${\scriptscriptstyle\pm 0.032}$ &
  \textbf{31.2${\scriptscriptstyle\pm 1.9}$} &
  \textbf{0.153${\scriptscriptstyle\pm 0.038}$} \\
SSL ALL &
  0.974${\scriptscriptstyle\pm 0.006}$ &
  44.0${\scriptscriptstyle\pm 1.9}$ &
  0.009${\scriptscriptstyle\pm 0.005}$ &
  0.929${\scriptscriptstyle\pm 0.016}$ &
  37.9${\scriptscriptstyle\pm 1.9}$ &
  0.033${\scriptscriptstyle\pm 0.011}$ &
  0.862${\scriptscriptstyle\pm 0.026}$ &
  33.0${\scriptscriptstyle\pm 1.7}$ &
  0.102${\scriptscriptstyle\pm 0.026}$ &
  0.814${\scriptscriptstyle\pm 0.034}$ &
  30.3${\scriptscriptstyle\pm 2.0}$ &
  0.191${\scriptscriptstyle\pm 0.059}$ \\
JSSL &
  \textbf{0.975${\scriptscriptstyle\pm 0.007}$} &
  \textbf{45.5${\scriptscriptstyle\pm 2.0}$} &
  \textbf{0.006${\scriptscriptstyle\pm 0.004}$} &
  \textbf{0.944${\scriptscriptstyle\pm 0.013}$} &
  \textbf{39.2${\scriptscriptstyle\pm 2.0}$} &
  \textbf{0.025${\scriptscriptstyle\pm 0.010}$} &
  \textbf{0.893${\scriptscriptstyle\pm 0.022}$} &
  \textbf{34.3${\scriptscriptstyle\pm 1.8}$} &
  \textbf{0.077${\scriptscriptstyle\pm 0.023}$} &
  \textbf{0.848${\scriptscriptstyle\pm 0.032}$} &
  31.1${\scriptscriptstyle\pm 2.1}$$^{*}$ &
  0.161${\scriptscriptstyle\pm 0.059}$$^{*}$ \\ \hline
\end{tabular}%
}
}
\label{tab:metrics-cine}
\end{table*}
%%------------------------------------------------TABLE ------------------------------------------------%%

%%------------------------------------------------FIGURE ------------------------------------------------%%
\begin{figure*}[!ht]
    \centering
    \includegraphics[width=\textwidth]{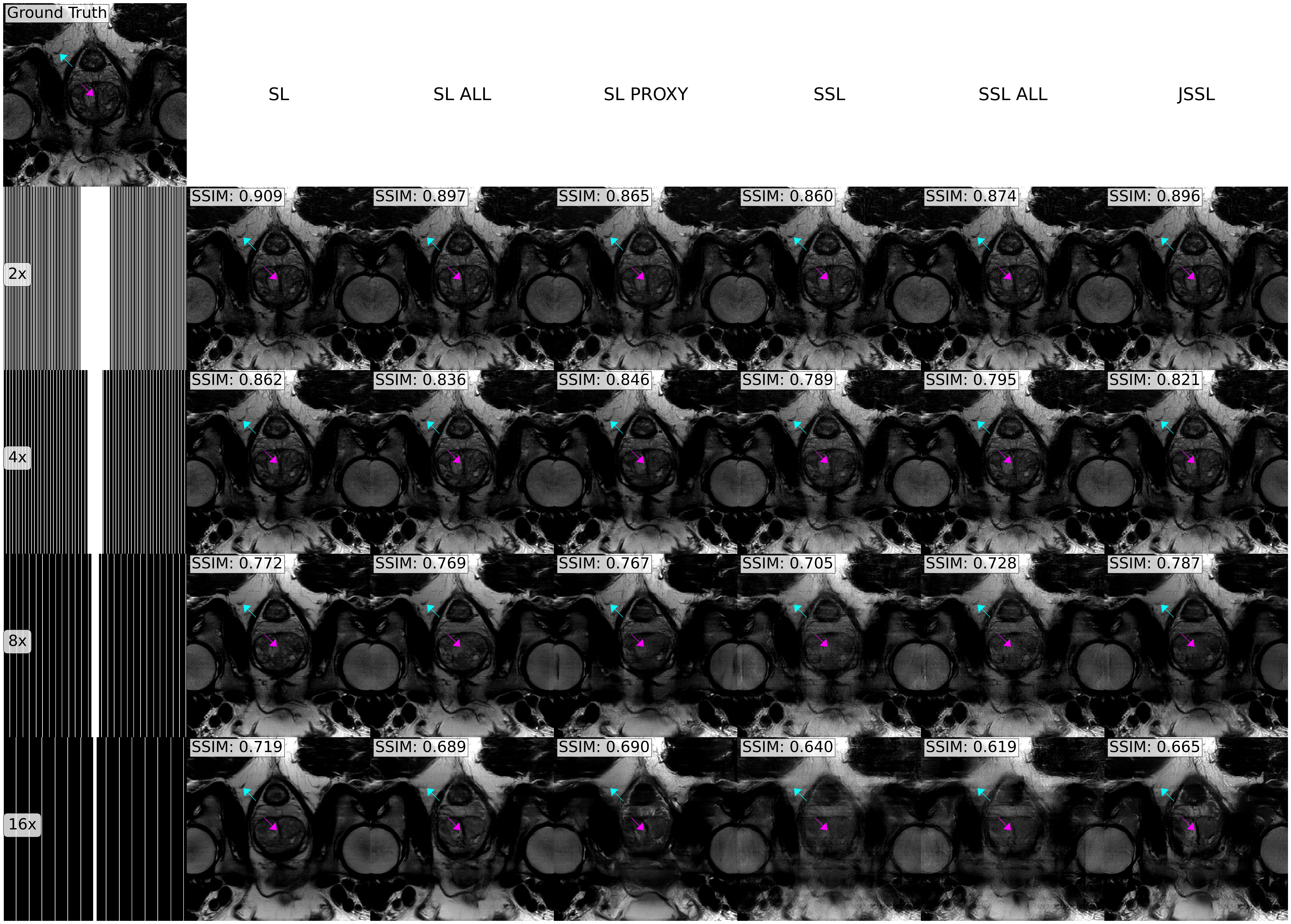}
    \caption{Example reconstructions of a slice from the prostate dataset subsampled at different acceleration factors from the test set (experiment set \textbf{A}) from each training setup.}
    \label{fig:example_figure}
\end{figure*}
%%------------------------------------------------FIGURE ------------------------------------------------%%

%%------------------------------------------------FIGURE ------------------------------------------------%%
\begin{figure*}[!ht]
    \centering
    \includegraphics[width=0.95\textwidth]{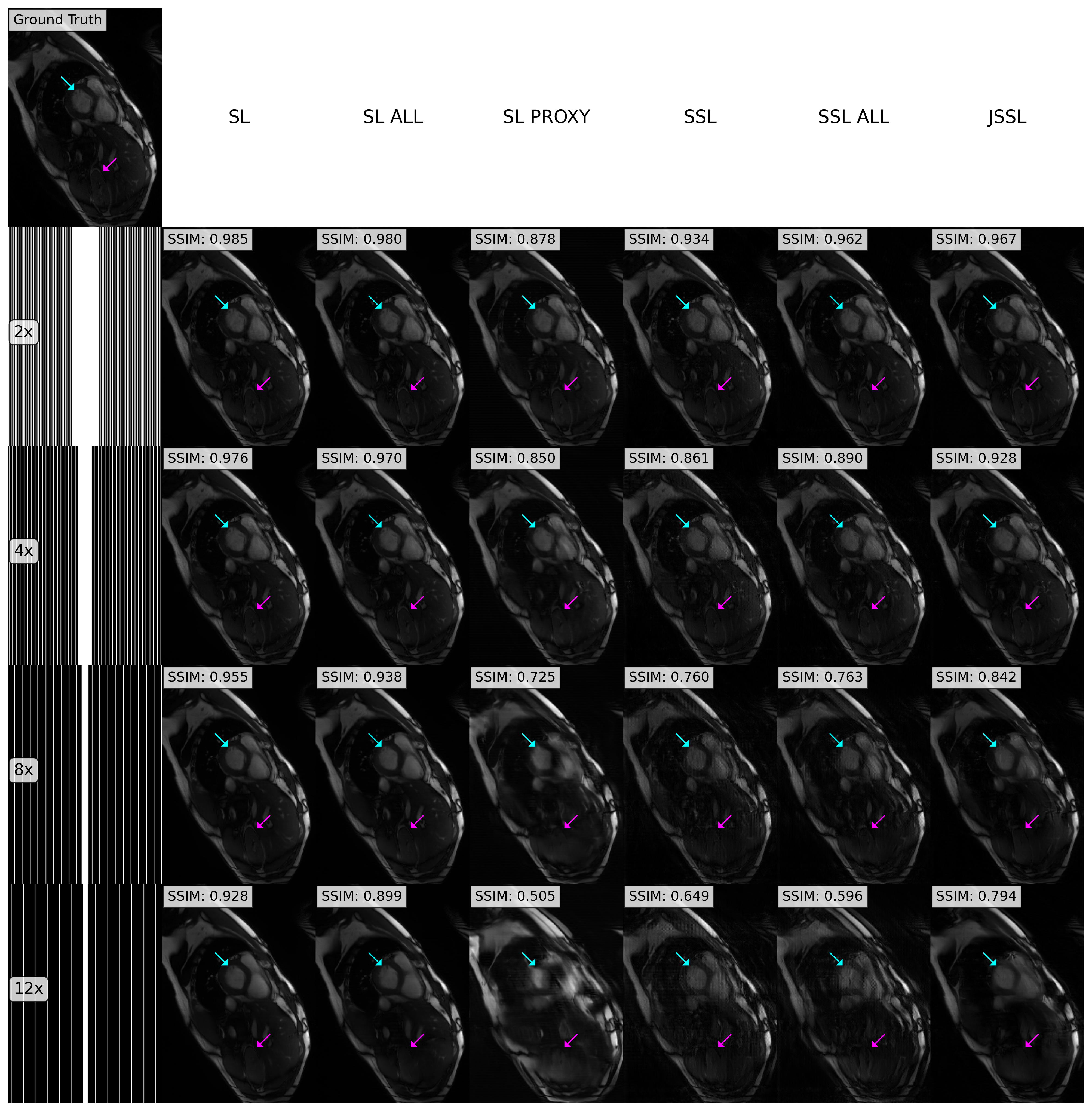}
    \caption{Example reconstructions of a slice from the cardiac cine dataset subsampled at different acceleration factors from the test set  (experiment set \textbf{B})  from each training setup.}
    \label{fig:example_figure_cardiac}
\end{figure*}
%%------------------------------------------------FIGURE ------------------------------------------------%%

The results of our comparative studies, for experiment set \textbf{A} (prostate data as target, brain and knee data as proxies) and experiment set \textbf{B} (cardiac data as target, brain, knee, cardiac data as proxies) are visually represented with box plots in Figures \ref{fig:metrics} and \ref{fig:metrics-cine}. The corresponding metric averages and statistical significance results are presented in Tables \ref{tab:metrics} and \ref{tab:metrics-cine}.

As these results indicate, supervised methods consistently yielded the best reconstruction outcomes in both experiment sets, as anticipated. However, it is important to reiterate that our aim is not to compete with supervised learning but to demonstrate that in scenarios where supervision is not feasible, JSSL can enhance the performance of SSL.

From \Figure{metrics} and \Table{metrics}, we observe that in experiment set \textbf{A} with prostate data as the target, the JSSL setup achieved superior reconstruction results across all acceleration factors and metrics compared to both SSL and SSL utilizing both proxy and target datasets (SSL ALL). Moreover, the JSSL approach proved to be a strong competitor to supervised tasks (SL and SL ALL), particularly at acceleration factors of 2, 4, and 8. Conversely, using proxy datasets in SSL settings (SSL ALL) did not enhance reconstruction performance compared to SSL. Additionally, training on all data in a supervised manner (SL ALL) did not result in better quality metrics than SL alone.

When training only on proxy datasets (SL PROXY) in experiment set \textbf{A}, we found that performing out-of-distribution inference on the prostate dataset outperformed SSL in terms of reconstruction quality. However, JSSL still surpassed SL PROXY in terms of SSIM across all acceleration factors, while it showed similar results in terms of pSNR and NMSE (let alone at $R=2$, where SL PROXY performed better).

In experiment set \textbf{B} with cardiac data as the target, similar trends were observed. JSSL was the best method among SSL-based approaches, except at $R=12$, where the SSL method showed (non-statistically significant) better pSNR and NMSE values. Unlike \textbf{A}, training on all data in a self-supervised manner (SSL ALL) improved performance over SSL only in the target domain, as seen from \Figure{metrics-cine} and \Table{metrics-cine}. In this experiment, SL PROXY performed worse than all other methods, highlighting the challenges of generalizing from proxy domains to the cardiac domain.

For visual assessment, Figures \ref{fig:example_figure} and \ref{fig:example_figure_cardiac} and Supporting Information Figures \ref{fig:S4} and \ref{fig:S5}, depict sample image reconstructions across all acceleration factors and training setups. For lower acceleration factors (2 and 4), all methods accurately reconstructed the target prostate accelerated data. However, at higher accelerations, only the supervised and JSSL setups were able to reconstruct the images with fewer artifacts compared to the SSL and SSL ALL methods.  The same trend was observed for the cardiac dataset, where the SSL method produced visually weaker reconstructions, especially at acceleration factors of 4$\times$ to 12$\times$. Furthermore, for acceleration factors of 8 and 12, both SSL and SSL ALL experiments produced highly aliased images.

\subsubsection{Robustness to Model Choice Experiments Results}

% ------------------------------TABLE-------------------------------------------%
\begin{table*}[!htb]
\centering
\caption{Robustness to model choice experiments results. 
% An asterisk ($*$) denotes that the average best method (bold) was not found to be statistically significantly better than the corresponding method ($p>0.05$).
}
\label{tab:comparison-unet-varnet}
\setlength{\tabcolsep}{0.8pt}
{\renewcommand{\arraystretch}{1.8}
\resizebox{1\textwidth}{!}{%
\begin{tabular}{ccccccccccc}
\hline
\multirow{2}{*}{\textbf{Architecture}} &
  \multirow{2}{*}{\textbf{Setup}} &
  \multicolumn{3}{c}{\textbf{4x}} &
  \multicolumn{3}{c}{\textbf{8x}} &
  \multicolumn{3}{c}{\textbf{16x}} \\ \cline{3-11} 
 &
   &
  SSIM &
  pSNR &
  NMSE &
  SSIM &
  pSNR &
  NMSE &
  SSIM &
  pSNR &
  NMSE \\ \hline
\multirow{2}{*}{U-Net} &
  SSL &
  0.854${\scriptscriptstyle\pm 0.031}$ &
  33.0${\scriptscriptstyle\pm 1.6}$ &
  0.013${\scriptscriptstyle\pm 0.004}$ &
  0.742${\scriptscriptstyle\pm 0.040}$ &
  29.4${\scriptscriptstyle\pm 1.4}$ &
  0.030${\scriptscriptstyle\pm 0.006}$ &
  0.651${\scriptscriptstyle\pm 0.051}$ &
  \textbf{26.7${\scriptscriptstyle\pm 1.5}$} &
  \textbf{0.055${\scriptscriptstyle\pm 0.009}$} \\
 &
  JSSL &
  \textbf{0.863${\scriptscriptstyle\pm 0.031}$} &
  \textbf{33.5${\scriptscriptstyle\pm 1.5}$} &
  \textbf{0.012${\scriptscriptstyle\pm 0.002}$} &
  \textbf{0.759${\scriptscriptstyle\pm 0.042}$} &
  \textbf{29.7${\scriptscriptstyle\pm 1.4}$} &
  \textbf{0.027${\scriptscriptstyle\pm 0.005}$} &
  \textbf{0.663${\scriptscriptstyle\pm 0.051}$} &
  26.7${\scriptscriptstyle\pm 1.4}$$^*$ &
  0.055${\scriptscriptstyle\pm 0.009}$$^*$ \\ \hline
\multirow{2}{*}{E2EVarNet} &
  SSL &
  0.874${\scriptscriptstyle\pm 0.029}$ &
  33.7${\scriptscriptstyle\pm 1.7}$ &
  0.011${\scriptscriptstyle\pm 0.003}$ &
  0.770${\scriptscriptstyle\pm 0.039}$ &
  30.0${\scriptscriptstyle\pm 1.4}$ &
  0.025${\scriptscriptstyle\pm 0.006}$ &
  0.670${\scriptscriptstyle\pm 0.051}$ &
  27.0${\scriptscriptstyle\pm 1.5}$ &
  0.051${\scriptscriptstyle\pm 0.009}$ \\
 &
  JSSL &
  \textbf{0.888${\scriptscriptstyle\pm 0.032}$} &
  \textbf{34.9${\scriptscriptstyle\pm 1.6}$} &
  \textbf{0.008${\scriptscriptstyle\pm 0.002}$} &
  \textbf{0.784${\scriptscriptstyle\pm 0.042}$} &
  \textbf{30.5${\scriptscriptstyle\pm 1.4}$} &
  \textbf{0.023${\scriptscriptstyle\pm 0.005}$} &
  \textbf{0.678${\scriptscriptstyle\pm 0.053}$} &
  \textbf{27.1${\scriptscriptstyle\pm 1.5}$} &
  \textbf{0.050${\scriptscriptstyle\pm 0.009}$} \\ \hline
\end{tabular}%
}
}
\end{table*}
% ------------------------------TABLE-------------------------------------------%

The average results of our supplementary comparative studies to assess JSSL's robustness to different architecture choices are depicted via box plots in Supporting Information Figures \ref{fig:S2} and \ref{fig:S3} for U-Net and E2EVarNet, respectively. Corresponding average metrics are provided in \Table{comparison-unet-varnet}. From these results we observe alignment with our original findings: JSSL-trained models consistently outperform SSL-trained models for both architecture choices.

Furthermore, the superior performance of vSHARP and E2EVarNet compared to the U-Net model in both SSL and JSSL settings across all acceleration factors highlights the advantage of adopting physics-guided unrolled models for reconstruction. It is also worth mentioning that vSHARP consistently outperformed E2EVarNet at all accelerations.

% \subsection{Ablation}
\subsection{Alternative Configurations}
\label{sec:subsec4.2}

% ------------------------------TABLE-------------------------------------------%
\begin{table*}[!hbt]
\centering
\caption{Average results for alternative configurations with the fastMRI prostate as target dataset. 
% An asterisk ($*$) denotes that the average best method (bold) was not found to be statistically significantly better than the corresponding method ($p>0.05$).
}
\resizebox{1\textwidth}{!}{%
\setlength{\tabcolsep}{0.8pt}
{\renewcommand{\arraystretch}{2}
\begin{tabular}{ccccccccccccc}
\hline
\multirow{2}{*}{\textbf{Setup}} &
  \multicolumn{3}{c}{\textbf{2x}} &
  \multicolumn{3}{c}{\textbf{4x}} &
  \multicolumn{3}{c}{\textbf{8x}} &
  \multicolumn{3}{c}{\textbf{16x}} \\ \cline{2-13} 
 &
  SSIM &
  pSNR &
  NMSE &
  SSIM &
  pSNR &
  NMSE &
  SSIM &
  pSNR &
  NMSE &
  SSIM &
  pSNR &
  NMSE \\ \hline
SSL Original &
  0.956${\scriptscriptstyle\pm 0.015}$ &
  38.8${\scriptscriptstyle\pm 2.6}$ &
  0.004${\scriptscriptstyle\pm 0.002}$ &
  0.891${\scriptscriptstyle\pm 0.030}$ &
  34.7${\scriptscriptstyle\pm 2.0}$ &
  0.009${\scriptscriptstyle\pm 0.003}$ &
  0.801${\scriptscriptstyle\pm 0.038}$ &
  31.1${\scriptscriptstyle\pm 1.5}$ &
  0.020${\scriptscriptstyle\pm 0.005}$ &
  0.707${\scriptscriptstyle\pm 0.050}$ &
  28.0${\scriptscriptstyle\pm 1.6}$ &
  0.041${\scriptscriptstyle\pm 0.008}$ \\
SSL ALL Original &
  0.953${\scriptscriptstyle\pm 0.016}$ &
  38.6${\scriptscriptstyle\pm 2.5}$ &
  0.004${\scriptscriptstyle\pm 0.002}$ &
  0.892${\scriptscriptstyle\pm 0.031}$ &
  34.8${\scriptscriptstyle\pm 2.0}$ &
  0.009${\scriptscriptstyle\pm 0.004}$ &
  0.801${\scriptscriptstyle\pm 0.041}$ &
  31.1${\scriptscriptstyle\pm 1.6}$ &
  0.020${\scriptscriptstyle\pm 0.006}$ &
  0.699${\scriptscriptstyle\pm 0.052}$ &
  27.8${\scriptscriptstyle\pm 1.6}$ &
  0.043${\scriptscriptstyle\pm 0.010}$ \\
JSSL Original &
  \textbf{0.965${\scriptscriptstyle\pm 0.015}$} &
  \textbf{39.5${\scriptscriptstyle\pm 2.8}$} &
  \textbf{0.003${\scriptscriptstyle\pm 0.002}$} &
  \textbf{0.918${\scriptscriptstyle\pm 0.026}$} &
  \textbf{36.4${\scriptscriptstyle\pm 1.9}$} &
  \textbf{0.006${\scriptscriptstyle\pm 0.002}$} &
  \textbf{0.842${\scriptscriptstyle\pm 0.038}$} &
  \textbf{32.5${\scriptscriptstyle\pm 1.6}$} &
  \textbf{0.015${\scriptscriptstyle\pm 0.004}$} &
  \textbf{0.752${\scriptscriptstyle\pm 0.053}$} &
  \textbf{29.3${\scriptscriptstyle\pm 1.6}$} &
  \textbf{0.030${\scriptscriptstyle\pm 0.007}$} \\ \hline
SSL ALL Oversamp. &
  \multicolumn{1}{r}{0.955${\scriptscriptstyle\pm 0.015}$} &
  \multicolumn{1}{r}{38.5${\scriptscriptstyle\pm 2.5}$} &
  \multicolumn{1}{r}{0.004${\scriptscriptstyle\pm 0.002}$} &
  \multicolumn{1}{r}{0.891${\scriptscriptstyle\pm 0.030}$} &
  \multicolumn{1}{r}{34.6${\scriptscriptstyle\pm 2.0}$} &
  \multicolumn{1}{r}{0.009${\scriptscriptstyle\pm 0.004}$} &
  \multicolumn{1}{r}{0.807${\scriptscriptstyle\pm 0.039}$} &
  \multicolumn{1}{r}{31.3${\scriptscriptstyle\pm 1.6}$} &
  \multicolumn{1}{r}{0.019${\scriptscriptstyle\pm 0.006}$} &
  \multicolumn{1}{r}{0.712${\scriptscriptstyle\pm 0.051}$} &
  \multicolumn{1}{r}{28.1${\scriptscriptstyle\pm 1.6}$} &
  \multicolumn{1}{r}{0.040${\scriptscriptstyle\pm 0.009}$} \\
JSSL Oversamp. &
  \textbf{0.968${\scriptscriptstyle\pm 0.012}$} &
  \textbf{41.0${\scriptscriptstyle\pm 2.1}$} &
  \textbf{0.002${\scriptscriptstyle\pm 0.001}$} &
  \textbf{0.919${\scriptscriptstyle\pm 0.026}$} &
  \textbf{36.7${\scriptscriptstyle\pm 1.9}$} &
  \textbf{0.006${\scriptscriptstyle\pm 0.002}$} &
  \textbf{0.842${\scriptscriptstyle\pm 0.038}$} &
  \textbf{32.6${\scriptscriptstyle\pm 1.6}$} &
  \textbf{0.014${\scriptscriptstyle\pm 0.004}$} &
  \textbf{0.749${\scriptscriptstyle\pm 0.052}$} &
  \textbf{29.2${\scriptscriptstyle\pm 1.6}$} &
  \textbf{0.031${\scriptscriptstyle\pm 0.008}$} \\ \hline
SSL (q=0.5) &
  0.957${\scriptscriptstyle\pm 0.014}$ &
  39.1${\scriptscriptstyle\pm 2.4}$ &
  0.003${\scriptscriptstyle\pm 0.002}$ &
  0.895${\scriptscriptstyle\pm 0.027}$ &
  35.0${\scriptscriptstyle\pm 1.9}$ &
  0.008${\scriptscriptstyle\pm 0.003}$ &
  0.817${\scriptscriptstyle\pm 0.038}$ &
  31.7${\scriptscriptstyle\pm 1.6}$ &
  0.017${\scriptscriptstyle\pm 0.005}$ &
  0.733${\scriptscriptstyle\pm 0.050}$ &
  28.9${\scriptscriptstyle\pm 1.6}$ &
  0.033${\scriptscriptstyle\pm 0.008}$ \\
SSL ALL (q=0.5) &
  0.956${\scriptscriptstyle\pm 0.015}$ &
  39.0${\scriptscriptstyle\pm 2.5}$ &
  0.004${\scriptscriptstyle\pm 0.002}$ &
  0.897${\scriptscriptstyle\pm 0.029}$ &
  35.2${\scriptscriptstyle\pm 2.0}$ &
  0.008${\scriptscriptstyle\pm 0.003}$ &
  0.816${\scriptscriptstyle\pm 0.039}$ &
  31.7${\scriptscriptstyle\pm 1.6}$ &
  0.018${\scriptscriptstyle\pm 0.005}$ &
  0.715${\scriptscriptstyle\pm 0.052}$ &
  28.3${\scriptscriptstyle\pm 1.6}$ &
  0.038${\scriptscriptstyle\pm 0.009}$ \\
JSSL (q=0.5) &
  \textbf{0.965${\scriptscriptstyle\pm 0.015}$} &
  \textbf{39.7${\scriptscriptstyle\pm 2.8}$} &
  \textbf{0.003${\scriptscriptstyle\pm 0.002}$} &
  \multicolumn{1}{r}{\textbf{0.919${\scriptscriptstyle\pm 0.026}$}} &
  \multicolumn{1}{r}{\textbf{36.6${\scriptscriptstyle\pm 1.8}$}} &
  \multicolumn{1}{r}{\textbf{0.006${\scriptscriptstyle\pm 0.002}$}} &
  \multicolumn{1}{r}{\textbf{0.842${\scriptscriptstyle\pm 0.037}$}} &
  \multicolumn{1}{r}{\textbf{32.6${\scriptscriptstyle\pm 1.6}$}} &
  \multicolumn{1}{r}{\textbf{0.014${\scriptscriptstyle\pm 0.004}$}} &
  \multicolumn{1}{r}{\textbf{0.742${\scriptscriptstyle\pm 0.052}$}} &
  \multicolumn{1}{r}{\textbf{29.0${\scriptscriptstyle\pm 1.6}$}} &
  \multicolumn{1}{r}{\textbf{0.032${\scriptscriptstyle\pm 0.007}$}} \\ \hline
SSL (w=10) &
  0.954${\scriptscriptstyle\pm 0.015}$ &
  38.4${\scriptscriptstyle\pm 2.5}$ &
  0.004${\scriptscriptstyle\pm 0.002}$ &
  0.893${\scriptscriptstyle\pm 0.029}$ &
  34.7${\scriptscriptstyle\pm 2.0}$ &
  0.009${\scriptscriptstyle\pm 0.004}$ &
  0.815${\scriptscriptstyle\pm 0.038}$ &
  31.6${\scriptscriptstyle\pm 1.6}$ &
  0.018${\scriptscriptstyle\pm 0.006}$ &
  0.726${\scriptscriptstyle\pm 0.048}$ &
  28.5${\scriptscriptstyle\pm 1.6}$ &
  0.036${\scriptscriptstyle\pm 0.008}$ \\
SSL ALL (w=10) &
  0.954${\scriptscriptstyle\pm 0.015}$ &
  38.4${\scriptscriptstyle\pm 2.5}$ &
  0.004${\scriptscriptstyle\pm 0.002}$ &
  0.890${\scriptscriptstyle\pm 0.029}$ &
  34.6${\scriptscriptstyle\pm 2.0}$ &
  0.009${\scriptscriptstyle\pm 0.004}$ &
  0.805${\scriptscriptstyle\pm 0.039}$ &
  31.2${\scriptscriptstyle\pm 1.6}$ &
  0.020${\scriptscriptstyle\pm 0.006}$ &
  0.710${\scriptscriptstyle\pm 0.052}$ &
  28.1${\scriptscriptstyle\pm 1.6}$ &
  0.040${\scriptscriptstyle\pm 0.009}$ \\
JSSL (w=10) &
  \textbf{0.958${\scriptscriptstyle\pm 0.016}$} &
  \textbf{38.7${\scriptscriptstyle\pm 2.6}$} &
  \textbf{0.004${\scriptscriptstyle\pm 0.002}$} &
  \textbf{0.916${\scriptscriptstyle\pm 0.026}$} &
  \textbf{36.4${\scriptscriptstyle\pm 1.8}$} &
  \textbf{0.006${\scriptscriptstyle\pm 0.002}$} &
  \textbf{0.839${\scriptscriptstyle\pm 0.038}$} &
  \textbf{32.5${\scriptscriptstyle\pm 1.6}$} &
  \textbf{0.015${\scriptscriptstyle\pm 0.004}$} &
  \textbf{0.748${\scriptscriptstyle\pm 0.052}$} &
  \textbf{29.2${\scriptscriptstyle\pm 1.6}$} &
  \textbf{0.031${\scriptscriptstyle\pm 0.008}$} \\ \hline
\end{tabular}%
}
}
\label{tab:ablation}
\end{table*}
% ------------------------------TABLE-------------------------------------------%

Summarized in \Table{ablation}, we calculated the average evaluation metrics on the test set for our alternative configurations experiments, providing additional context to the JSSL approach.  These experiments consistently showcased the superior performance of JSSL over SSL setups, in line with our prior observations.  Interestingly, variations in the training hyper-parameters for JSSL, such as oversampling, partitioning ratio ($q=0.5$), and ACS window size ($w=10$), did not yield significant improvements or deteriorations in performance, except for an observable improvement in average pSNR at $R=2$.

Regarding the SSL setups, an observable enhancement was witnessed for 8$\times$ and 16$\times$ accelerated data when adopting a fixed partitioning ratio $q=0.5$ or a larger ACS window of $w^2=10^2$ pixels. However, this improvement was particularly evident in the SSL setup using solely the (subsampled) proxy dataset. Furthermore, the inclusion of proxy datasets within SSL configurations (SSL ALL) did not yield improvements in reconstruction performance, consistent with our earlier findings in the comparative study.

For further assessment, we provide in Supporting Information Appendix \ref{sec:ap4} box plots illustrating comprehensively the performance metrics, as well as sample reconstructions for each setup considered in the alternative configurations study.

%%------------------------------------------------SECTION 5 ------------------------------------------------%%
\section{Discussion and Conclusion}
\label{sec:sec5}
Our study introduces Joint Supervised and Self-supervised Learning, a novel training framework designed to enhance the quality of MRI reconstructions when fully-sampled $k$-space data is unavailable for the target domain. JSSL leverages the strengths of supervised learning by incorporating fully-sampled proxy datasets alongside subsampled target datasets in a self-supervised manner. Evidently, this approach significantly outperforms traditional self-supervised learning methods, offering a promising alternative for MRI reconstruction in clinically challenging scenarios where acquiring fully-sampled data is not feasible.

Through comprehensive experiments, we demonstrated that JSSL consistently yields higher reconstruction quality across various acceleration factors compared to self-supervised learning alone. Notably, JSSL showed robust performance improvements even when the proxy datasets were from different anatomical regions, such as brain and knee MRI, or brain, knee and prostate MRI, compared to the target dataset, prostate or cardiac cine, respectively. This potentially indicates that the model effectively learns the underlying physics and reconstruction principles from the supervised task, which it can then apply to different target domains.

Our alternative configurations studies further confirmed that JSSL maintains its superiority under different training configurations, underscoring its robustness and adaptability to various clinical settings. In addition, we tested JSSL with different deep learning architectures, including the traditional U-Net and the state-of-the-art End-to-End Variational Network (E2EVarNet). These experiments revealed that JSSL's performance improvements are consistent regardless of the underlying model architecture, demonstrating the framework's robustness and independence to model choice.

While our experiments indicate that JSSL demonstrates improvements over conventional SSL methods, several limitations warrant discussion. Firstly, the efficacy of JSSL is highly dependent on the availability and quality of proxy datasets. Although datasets such as the fastMRI datasets contain fully-sampled data and are readily available, there might be instances where such datasets cannot be used. This could occur in cases where the anatomical regions of interest in the proxy datasets are not sufficiently similar to those in the target dataset, or where differences in imaging protocols and acquisition parameters introduce significant discrepancies.

For instance, in experiment set \textbf{A}, where the fastMRI prostate data served as the target domain and brain and knee fastMRI datasets were used as proxies, the SL PROXY setup showed relatively good performance, indicating that training with similar proxy domains can still be beneficial for out-of-distribution inference. However, in experiment set \textbf{B}, where the CMRxRecon cardiac data was the target and brain, knee, and prostate fastMRI datasets served as proxies, the performance of SL PROXY was significantly lower than all methods, highlighting that when proxies are dissimilar to the target, SL PROXY struggles to generalize effectively. In both scenarios, JSSL consistently surpassed SL PROXY, indicating that the combined supervised and self-supervised approach is more robust, regardless of the proxy dataset's similarity.

Additionally, the inclusion of proxy datasets in training can introduce biases, particularly if there are substantial differences between the proxy and target domains. This bias could potentially degrade the model's performance on the target dataset, as observed in some of our supervised learning experiments.

Moreover, similar to any DL-based method, JSSL's performance is influenced by the choice of loss functions for each component of the JSSL loss and their weighting in the loss $\mathcal{L}_{\boldsymbol{\psi}}^{\text{JSSL}}$. In our experiments, we employed identical dual-domain loss functions for each component and equal weighting for the SL and SSL components (see \Section{para3.7.4.3}). However, different loss and weighting choices might affect JSSL's performance.

JSSL performance also depends on the partitioning strategy used for subsampled data in self-supervised learning. While we adopted a Gaussian partitioning scheme, alternative strategies might yield different results and require further exploration. The optimal partitioning scheme may vary depending on the specific characteristics of the target and proxy datasets, as well as the desired reconstruction quality.

Lastly, our experiments are limited to comparing only one SSL method (SSDU) and does not consider other proposed self-supervised methodologies. However, the reason for comparing to SSDU only is that we consider it representative, as most SSL-based methods are derivatives of SSDU and still employ SSL-based losses to train their models (refer to \Section{sec2}). In addition, comparing to methods that train more than one model as their SSL task is outside the scope of this research, as this can introduce additional computational difficulties and are derivatives of the SSDU method. Our purpose is to compare JSSL and SSL training methods in their general forms.

Despite the advantages of JSSL, it is important to note that supervised learning remains the best option when fully-sampled ground truth data are available for the target dataset. Supervised methods provide the highest reconstruction quality due to the availability of accurate and complete training data.

Based on our empirical findings, we propose practical training ``rule-of-thumb'' guidelines when determining the approach for deep MRI reconstruction algorithms:

\begin{enumerate}[label=(\arabic*),leftmargin=*]
    \item If ground truth data are available for the target dataset, opt for supervised training.
    \item If ground truth data are not available for the target dataset but subsampled data are present, and ground truth data exist from other datasets (e.g. fastMRI or CMRxRecon datasets), consider adopting the JSSL approach.
    \item If only ground truth data from proxy datasets are available, training solely in the proxy domains with supervised learning can be beneficial, especially when the proxy domains are anatomically or contextually similar to the target domain. However, if the proxies are dissimilar, avoid this avenue.
    \item In cases where only subsampled data are accessible for the target dataset without ground truth data from other proxy datasets, proceed with self-supervision. If subsampled proxy data are available, incorporating these data might help. A fixed partitioning ratio might be preferable for high accelerations.
\end{enumerate}

In conclusion, JSSL presents a robust solution for MRI reconstruction in scenarios where fully-sampled $k$-space data is not available. By jointly leveraging supervised and self-supervised learning, JSSL significantly enhances reconstruction quality, especially at high acceleration factors. The proposed approach sets a new benchmark for self-supervised MRI reconstruction methods and opens new avenues for research and clinical applications.

% Future work focused on optimizing and validating JSSL will be pivotal in fully realizing its potential in the field of medical imaging. 
% Addressing the identified limitations and exploring the proposed future directions will further strengthen JSSL's capabilities and facilitate its adoption in clinical practice, ultimately improving patient care and diagnostic outcomes.

%%------------------------------------------------MISCALLANEOUS ------------------------------------------------%%

\section*{Funding Information}
This work was supported by institutional grants of the Dutch Cancer Society and of the Dutch Ministry of Health, Welfare and Sport.

\section*{Acknowledgments}
The authors would like to acknowledge the Research High Performance Computing (RHPC) facility of the Netherlands Cancer Institute (NKI).

\clearpage

\Large
\noindent
\textbf{Joint Supervised and Self-Supervised Learning for MRI Reconstruction - Supporting Information}

\normalsize
\appendix

\setcounter{figure}{0} 
\setcounter{table}{0} 
\renewcommand{\thefigure}{S\arabic{figure}}
\renewcommand{\thetable}{S\arabic{table}}

\section{JSSL Theoretical Motivation}
\label{sec:ap1}
\noindent
In this appendix we provide theoretical motivations along with proofs for the JSSL method  presented in \Section{subsec3.8} of the main paper.

\begin{prop}
    Consider two distributions $p_i, \, i=1,2$ with means and variances ${\mu}_i, {\sigma}_i, \, i=1,2$, with unknown ${\mu}_1$, and ${\mu}_1 \neq {\mu}_2$. Then if $(\mu_1 - \mu_2)^2 < c \frac{{\sigma}_1^2} N$ for some $c \in (0, 1)$ and $N \in \mathbb{Z}^{+}$, then $\tilde{{x}} = \frac{1}{N+K} \sum_{i=1}^{N+K} {x}_i$ is a lower-variance estimator of $\mu_1$ compared to $\overline{{x}} = \frac{1}{N} \sum_{i=1}^N {x}_i$, where $\left\{{x}^{(i)} \sim p_1\right\}_{i=1}^{N}$ and $\left\{{x}^{(N+i)} \sim p_2 \right\}_{i=1}^{K}$ for a choice of a large $K \in \mathbb{Z}^{+} $.
% \label{prop:prop1}
\end{prop}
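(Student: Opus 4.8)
The plan is to treat $\tilde{x}$ as a \emph{biased} estimator of $\mu_1$ and to compare its mean squared error (risk) against that of the unbiased estimator $\bar{x}$. I interpret ``lower-variance'' here in the risk sense, since the proxy samples from $p_2$ introduce a nonzero bias that must be bundled with the variance; the plain variance of $\tilde x$ would shrink to zero as $K\to\infty$ regardless of the means, so it cannot be the quantity the hypothesis is designed to control. First I would record that the $N+K$ samples are mutually independent, the first $N$ drawn from $p_1$ and the last $K$ from $p_2$. This gives $\mathbb{E}[\bar{x}] = \mu_1$ and $\mathrm{Var}(\bar{x}) = \sigma_1^2/N$, so the risk of $\bar{x}$ equals exactly $\sigma_1^2/N$.

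Next I would compute the first two moments of $\tilde{x}$. Linearity of expectation yields $\mathbb{E}[\tilde{x}] = (N\mu_1 + K\mu_2)/(N+K)$, whence the bias is $\mathbb{E}[\tilde{x}] - \mu_1 = K(\mu_2 - \mu_1)/(N+K)$. Independence gives $\mathrm{Var}(\tilde{x}) = (N\sigma_1^2 + K\sigma_2^2)/(N+K)^2$. Combining these through the bias--variance decomposition produces
\[
\mathrm{MSE}(\tilde{x}) = \frac{N\sigma_1^2 + K\sigma_2^2 + K^2(\mu_1-\mu_2)^2}{(N+K)^2}.
\]

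The final step is to show this risk drops below $\sigma_1^2/N$ for large $K$. The quickest route is the limit $\mathrm{MSE}(\tilde{x}) \to (\mu_1-\mu_2)^2$ as $K\to\infty$: the hypothesis $(\mu_1-\mu_2)^2 < c\,\sigma_1^2/N$ with $c\in(0,1)$ forces this limiting value strictly below $\sigma_1^2/N = \mathrm{MSE}(\bar{x})$, so by continuity the strict inequality persists for all sufficiently large $K$. For a fully explicit threshold I would instead clear denominators, cancel one factor of $K$, and reduce the desired inequality to $K\bigl(N(\mu_1-\mu_2)^2 - \sigma_1^2\bigr) < N(2\sigma_1^2 - \sigma_2^2)$. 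The hypothesis ensures $N(\mu_1-\mu_2)^2 - \sigma_1^2 \le (c-1)\sigma_1^2 < 0$, so the coefficient of $K$ is negative and the inequality holds for every $K$ exceeding $N(\sigma_2^2 - 2\sigma_1^2)/\bigl(\sigma_1^2 - N(\mu_1-\mu_2)^2\bigr)$ (and for all $K\ge 1$ when $\sigma_2^2 \le 2\sigma_1^2$).

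The main obstacle is conceptual rather than computational: one must recognize that the stated hypothesis controls the squared bias $(\mu_1-\mu_2)^2$, which is precisely the irreducible limiting risk as $K\to\infty$, not the variance. Care is also needed with the sign of the coefficient of $K$ after clearing denominators, since the direction of the concluding inequality hinges on $c<1$ guaranteeing $N(\mu_1-\mu_2)^2 < \sigma_1^2$; overlooking this sign would reverse the bound. Once that is handled, the result follows from elementary algebra.
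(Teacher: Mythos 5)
Your proof is correct, and it takes a cleaner, more direct route than the paper. The paper frames the pooled sample as \emph{approximately} $N+K$ i.i.d.\ draws from a Gaussian mixture $p_{\pi}(x) = \pi\,\mathcal N(x\mid\mu_1,\sigma_1^2) + (1-\pi)\,\mathcal N(x\mid\mu_2,\sigma_2^2)$ with $\pi = N/(N+K)$, computes the mixture mean and variance, and obtains $\mathbb E[(\tilde x - \mu_1)^2] = (1-\pi)^2(\mu_1-\mu_2)^2 + \bigl(\pi\sigma_1^2 + (1-\pi)\sigma_2^2 + \pi(1-\pi)(\mu_1-\mu_2)^2\bigr)/(N+K)$, before taking $K\to\infty$ exactly as you do. You instead compute the moments of $\tilde x$ exactly from the fixed sample counts and independence, getting $\mathrm{MSE}(\tilde x) = \bigl(N\sigma_1^2 + K\sigma_2^2 + K^2(\mu_1-\mu_2)^2\bigr)/(N+K)^2$; this agrees with the paper's expression except for the spurious provenance-variance term $NK(\mu_1-\mu_2)^2/(N+K)^3$ that the mixture approximation introduces (it vanishes in the limit, so both arguments reach the same conclusion). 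Your version buys three things: it is exact rather than ``approximately equivalent''; it is distribution-free, using only the stated means and variances where the paper silently assumes Gaussian components the proposition never posits; and your cleared-denominator computation yields an explicit finite threshold, $K > N(\sigma_2^2 - 2\sigma_1^2)/\bigl(\sigma_1^2 - N(\mu_1-\mu_2)^2\bigr)$ when $\sigma_2^2 > 2\sigma_1^2$ and all $K\ge 1$ otherwise, whereas the paper only asserts the asymptotic statement and leaves the ``sufficiently large $K$'' step implicit. What the paper's mixture framing buys in exchange is mainly narrative: it mirrors the JSSL intuition of sampling from a pooled proxy-plus-target population. Your opening observation that ``lower-variance'' must be read as mean squared error (since $\mathrm{Var}(\tilde x)\to 0$ trivially) matches the paper's implicit reading — it likewise compares $\mathbb E[(\cdot - \mu_1)^2]$ via the bias--variance decomposition — and making that interpretation explicit is an improvement over the original.
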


\begin{proof}
    Assume a mixture distribution:
\begin{equation*}
    p_{\pi}(x) = \pi \mathcal{N}(x | {\mu}_1, {\sigma}_1^2) + (1-\pi) \mathcal{N}(x | {\mu}_2, {\sigma}_2^2).
\end{equation*}
\noindent
It is then straightforward to compute:
\begin{equation*}
    \begin{gathered}
           \mathbb E\left[p_{\pi}\right] = \pi {\mu}_1 + (1- \pi) {\mu}_2
    \end{gathered}
\end{equation*}
and,
\begin{equation*}
    \mathbb{V} \left[p_{\pi}\right] = \pi {\sigma}_1^2 + (1-\pi) {\sigma}_2^2 + \pi (1-\pi) ({\mu}_2 - {\mu}_1)^2.
\end{equation*}

\noindent
Drawing $\displaystyle \left\{{x}^{(i)} \sim p_1\right\}_{i=1}^{N}$ and $\displaystyle 
\left\{{x}^{(N+i)} \sim p_2 \right\}_{i=1}^{K}$, is approximately equivalent to drawing $N+K$ samples from the mixture $p_{\pi}$ with $\pi = \frac{N}{N+K}$. Using bias-variance decomposition, we can compute the expected mean squared errors for the two estimators:

\begin{equation*}
   \mathbb E\left[(\overline{{x}} - {\mu}_1)^2 \right] = \frac{{\sigma}_1^2}{N},
\label{eq.mse_est2_2}
\end{equation*}
and,
\begin{equation*}
    \mathbb{E}\left[(\tilde{{x}} - {\mu}_1)^2 \right] = 
   (1-\pi)^2 (\mu_1 - \mu_2)^2 + \frac{\pi \sigma_1^2 + (1-\pi) \sigma_2^2 + \pi (1-\pi) (\mu_1 - \mu_2)^2}{N+K}.
\label{eq.mse_est2}
\end{equation*}

If $(\mu_1 - \mu_2)^2 < c \frac{{\sigma}_1^2} N$ for some $c \in (0, 1)$, then  taking the limit $K \to \infty$ and thus $\pi \to 0$, we observe that \begin{equation*}
\mathbb{E}\left[(\tilde{{x}} - {\mu}_1)^2  \right] \to (\mu_1 - \mu_2)^2 < c \frac{{\sigma}_1^2}{N} < \frac{{\sigma}_1^2}{N} = \mathbb E\left[(\overline{{x}} - {\mu}_1)^2 \right].
\end{equation*}

\end{proof}

\break
\begin{prop}
% \label{prop:prop2}
Let $\bm x \sim \mathcal N(\bm 0, \sigma^2 \bm I_{p})$ be $\mathbb R^p$-valued isotropic Gaussian random vector and $y, \tilde y$ be random variables with $p(y|\bm x) = \mathcal N(y| \bm w^T \bm x, \varepsilon^2)$ and $p(\tilde y|\bm x) = \mathcal N(\tilde y| \bm{\tilde w}^T \bm x, \tilde \varepsilon^2)$ for some $\bm w, \bm{\tilde w} \in \mathbb R^p$. Let $\mathcal T = \{(\bm x_1, \tilde y_1),\dots,  (\bm x_K, \tilde y_K)\}$ be a training data set with $K > p$ and consider a maximum likelihood estimator $\widehat y(\bm x; \mathcal T)$ for $y$ given $\bm x$, computed using $\mathcal T$. Then the following holds:
\begin{enumerate}
\item $\mathrm{Bias}_{\mathcal T}[\widehat y(\bm x; \mathcal T)] = (\bm{\tilde w}^T - \bm w^T) \bm x.$
\item $\mathrm{Var}_{\mathcal T}[\widehat y(\bm x; \mathcal T)] =  \frac{\tilde \varepsilon^2}{\sigma^2 K} \| \bm x \|_2^2$.
\item $\mathbb E_{(\bm x, y)} [\widehat y(\bm x; \mathcal T) - y]^2 \leq  p \sigma^2 \| \bm {\tilde w} - \bm w \|_2^2  + \frac{p \tilde \varepsilon^2}{K}  + \varepsilon^2$
\end{enumerate}
\end{prop}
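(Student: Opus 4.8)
The plan is to recognise that, because each $p(\tilde y \mid \bm x)$ is Gaussian with mean $\bm{\tilde w}^T \bm x$ and constant variance $\tilde\varepsilon^2$, the maximum likelihood estimator coincides with ordinary least squares fitted on the proxy data $\mathcal T$. Writing $\bm X = [\bm x_1, \dots, \bm x_K]^T \in \mathbb{R}^{K\times p}$ and $\tilde{\bm y} = (\tilde y_1, \dots, \tilde y_K)^T$, the estimator is $\widehat{\bm w} = (\bm X^T \bm X)^{-1} \bm X^T \tilde{\bm y}$, well-defined since $K > p$ makes $\bm X^T \bm X$ invertible almost surely, and the prediction at a fresh test point is the linear functional $\widehat y(\bm x; \mathcal T) = \bm x^T \widehat{\bm w}$. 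All three claims then follow from analysing this explicit form.

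For the bias (claim 1), I would condition on the design matrix $\bm X$ and use $\mathbb{E}[\tilde{\bm y} \mid \bm X] = \bm X \bm{\tilde w}$, which immediately gives $\mathbb{E}[\widehat{\bm w} \mid \bm X] = (\bm X^T \bm X)^{-1} \bm X^T \bm X \bm{\tilde w} = \bm{\tilde w}$, independent of $\bm X$, so $\mathbb{E}_{\mathcal T}[\widehat y(\bm x; \mathcal T)] = \bm{\tilde w}^T \bm x$. Since the quantity being estimated is $\mathbb{E}[y \mid \bm x] = \bm w^T \bm x$, the bias is exactly the model-mismatch term $(\bm{\tilde w}^T - \bm w^T)\bm x$. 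The essential point is that training on proxy labels makes the estimator consistent for $\bm{\tilde w}$ rather than $\bm w$, so the bias is precisely the gap between the two regression functions.

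For the variance (claim 2), I would again condition on $\bm X$, where the standard OLS covariance formula gives $\mathrm{Cov}[\widehat{\bm w} \mid \bm X] = \tilde\varepsilon^2 (\bm X^T \bm X)^{-1}$ and hence $\mathrm{Var}[\widehat y \mid \bm X] = \tilde\varepsilon^2\, \bm x^T (\bm X^T \bm X)^{-1} \bm x$. Because the rows of $\bm X$ are i.i.d.\ $\mathcal N(\bm 0, \sigma^2 \bm I_p)$, the sample second-moment matrix concentrates, $\tfrac{1}{K}\bm X^T \bm X \to \sigma^2 \bm I_p$, so I would substitute the approximation $\bm X^T \bm X \approx \sigma^2 K \bm I_p$ to obtain $\mathrm{Var}_{\mathcal T}[\widehat y(\bm x; \mathcal T)] = \tfrac{\tilde\varepsilon^2}{\sigma^2 K}\|\bm x\|_2^2$. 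This is the main obstacle worth flagging: the exact expectation $\mathbb{E}[(\bm X^T \bm X)^{-1}]$ follows an inverse-Wishart law equal to $\tfrac{1}{\sigma^2 (K-p-1)}\bm I_p$, so the stated identity is really the large-$K$ (equivalently $K \gg p$) approximation rather than an exact equality. I would make this concentration step explicit and note $K - p - 1 \approx K$.

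Finally, for the risk (claim 3), I would use the bias--variance--noise decomposition $\mathbb{E}_{\mathcal T, y}[(\widehat y - y)^2 \mid \bm x] = \mathrm{Bias}^2 + \mathrm{Var}_{\mathcal T}[\widehat y] + \varepsilon^2$ and then average over $\bm x \sim \mathcal N(\bm 0, \sigma^2 \bm I_p)$. For the squared bias I would apply Cauchy--Schwarz, $\big((\bm{\tilde w} - \bm w)^T \bm x\big)^2 \le \|\bm{\tilde w} - \bm w\|_2^2 \|\bm x\|_2^2$, and use $\mathbb{E}_{\bm x}[\|\bm x\|_2^2] = p\sigma^2$ to get the term $p\sigma^2 \|\bm{\tilde w} - \bm w\|_2^2$; this Cauchy--Schwarz step (rather than the tighter exact value $\sigma^2\|\bm{\tilde w} - \bm w\|_2^2$) is what turns the statement into an inequality and supplies the factor $p$. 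The variance term averages to $\tfrac{\tilde\varepsilon^2}{\sigma^2 K}\cdot p\sigma^2 = \tfrac{p\tilde\varepsilon^2}{K}$, and the irreducible noise contributes $\varepsilon^2$, so summing the three pieces yields the claimed bound.
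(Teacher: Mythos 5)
Your proposal follows essentially the same route as the paper's proof: identify the MLE with ordinary least squares on the proxy data, $\widehat{\bm w} = (\bm X^T\bm X)^{-1}\bm X^T\bm{\tilde y}$ (well-defined almost surely since $K>p$); get item 1 from conditional unbiasedness for $\bm{\tilde w}$ so the bias is the model-mismatch term; reduce item 2 to $\tilde\varepsilon^2\,\mathbb E\left[\bm x^T(\bm X^T\bm X)^{-1}\bm x\right]$; and obtain item 3 from the bias--variance--noise decomposition. The paper in fact leaves item 3 at ``follows from the first two identities and the bias--variance decomposition,'' so your Cauchy--Schwarz step together with $\mathbb E_{\bm x}\|\bm x\|_2^2 = p\sigma^2$ is exactly the omitted computation; and as you observe, the exact bias average $\sigma^2\|\bm{\tilde w}-\bm w\|_2^2$ is tighter than $p\sigma^2\|\bm{\tilde w}-\bm w\|_2^2$, so that inequality is safe.

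The one genuine divergence is the design-matrix moment in item 2, and your caveat there is not only legitimate but exposes a slip in the paper's proof. The paper computes $\mathbb E\left[(\bm X^T\bm X)^{-1}\right]$ ``exactly'' by asserting $(\bm X^T\bm X)^{-1}\sim\mathcal W^{-1}_p(\sigma^{-2}\bm I_p,\,K+p+1)$, whose mean would be $\bm I_p/(\sigma^2 K)$. But $\bm X^T\bm X\sim\mathcal W_p(\sigma^2\bm I_p,K)$ and matrix inversion preserves the degrees of freedom, so $(\bm X^T\bm X)^{-1}\sim\mathcal W^{-1}_p(\sigma^{-2}\bm I_p,K)$ with mean $\bm I_p/\left(\sigma^2(K-p-1)\right)$, exactly as you state. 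The exact variance is therefore $\frac{\tilde\varepsilon^2}{\sigma^2(K-p-1)}\|\bm x\|_2^2$, and item 2 as an equality --- and strictly even the constant $p\tilde\varepsilon^2/K$ in item 3 (take $\bm{\tilde w}=\bm w$ to see the exact risk exceed the stated bound) --- holds only under the substitution $K-p-1\approx K$, i.e.\ as the large-$K$ statement your concentration argument makes explicit. Your treatment is thus the more careful of the two; the qualitative conclusion the paper draws from the proposition (variance decaying at rate $1/K$ as proxy data grows) is unaffected.
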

\begin{proof}
Let $\bm{\tilde w}_{\mathrm{MLE}} = (\bm X^T \bm X)^{-1} \bm X^T \bm{\tilde y}$ be the MLE estimator for $\bm {\tilde w}$, where the $K$ rows of $\bm X \in \mathbb R^{K \times p}$ are given by $\bm x_1^T, \dots, \bm x_K^T$ and the vector $\bm{\tilde y}$ is defined as $\bm{\tilde y} := (\tilde y_1, \dots, \tilde y_K) \in \mathbb R^K$. Since $K > p$, matrix $\bm X$ has full column rank almost surely and thus $\bm X^T \bm X$ is almost surely invertible. Observe that 
\begin{equation*}
\mathbb E_{\mathcal T} [\bm{\tilde w}_{\mathrm{MLE}}^T] = \mathbb E_{\mathcal T} [(\bm{\tilde \varepsilon}^T + \bm{\tilde w}^T \bm X^T) \bm X (\bm X^T \bm X)^{-1}] =\bm{\tilde w}^T,
\end{equation*}
since $\bm{\tilde \varepsilon} := \bm{\tilde y} - \bm X \bm{\tilde w}$ has zero mean, is independent from $\bm x_i$'s and the expectation $\mathbb E_{\mathcal T}[\cdot]$ can be rewritten as $\mathbb E_{\bm x_1, \dots, \bm x_K} [\mathbb E_{\bm{\tilde \varepsilon}}[\cdot]]$.
By definition of estimator bias, 
\begin{equation*}
\mathrm{Bias}_{\mathcal T}[\widehat y(\bm x; \mathcal T)] = \mathbb E_{\mathcal T}[\widehat y(\bm x; \mathcal T)] - \mathbb E_{y | \bm x} y = \mathbb E_{\mathcal T} [\bm{\tilde w}_{\mathrm{MLE}}^T] \bm x - \bm w^T \bm x =(\bm{\tilde w}^T - \bm w^T) \bm x.
\end{equation*}
Next,
\begin{align*}
&\mathrm{Var}_{\mathcal T}[\widehat y(\bm x; \mathcal T)] = \mathbb E_{\mathcal T} [\mathbb E_{\mathcal T}[\widehat y(\bm x; \mathcal T)] - \widehat y(\bm x; \mathcal T)]^2 =\\
&= \mathbb E_{\mathcal T} [ \bm{\tilde w}^T \bm x - (\bm{\tilde \varepsilon}^T + \bm{\tilde w}^T \bm X^T) \bm X (\bm X^T \bm X)^{-1} \bm x]^2 = \mathbb E_{\mathcal T} [ \bm{\tilde \varepsilon}^T \bm X (\bm X^T \bm X)^{-1} \bm x]^2.
\end{align*}
The scalar $(\bm{\tilde \varepsilon}^T \bm X (\bm X^T \bm X)^{-1} \bm x)^2$ can be equivalently written as 
\begin{equation*}
(\bm{\tilde \varepsilon}^T \bm X (\bm X^T \bm X)^{-1} \bm x)^T (\bm{\tilde \varepsilon}^T \bm X (\bm X^T \bm X)^{-1} \bm x) =
\bm x^T (\bm X^T \bm X)^{-1} \bm X^T \bm{\tilde \varepsilon} \bm{\tilde \varepsilon}^T \bm X (\bm X^T \bm X)^{-1} \bm x.    
\end{equation*}
Using that $\mathbb E_{\mathcal T}[\cdot] = \mathbb E_{\bm x_1, \dots, \bm x_k} [\mathbb E_{\bm{\tilde \varepsilon}}[\cdot]]$, we deduce that
\begin{align*}
&\mathbb E_{\mathcal T} [ \bm{\tilde \varepsilon}^T \bm X (\bm X^T \bm X)^{-1} \bm x]^2 
= \mathbb E_{\bm x_1, \dots, \bm x_K} [ \bm x^T (\bm X^T \bm X)^{-1} \bm X^T \mathbb E_{\bm{\tilde \varepsilon}}[\bm{\tilde \varepsilon} \bm{\tilde \varepsilon}^T] \bm X (\bm X^T \bm X)^{-1} \bm x ]=\\
&=\tilde \varepsilon^2 \mathbb E_{\bm x_1, \dots, \bm x_K} [ \bm x^T (\bm X^T \bm X)^{-1} \bm x ] = \tilde \varepsilon^2 \mathbb E_{\bm x_1, \dots, \bm x_K} [\mathrm{tr}(\bm x^T (\bm X^T \bm X)^{-1} \bm x) ] =\\
&= \tilde \varepsilon^2 \mathbb E_{\bm x_1, \dots, \bm x_K} [\mathrm{tr}(\bm x \bm x^T (\bm X^T \bm X)^{-1})] = \tilde \varepsilon^2  \mathrm{tr}(\bm x \bm x^T \mathbb E_{\bm x_1, \dots, \bm x_K} [(\bm X^T \bm X)^{-1}]),
\end{align*}
where we use cyclic property of the trace and the fact that $z = \mathrm{tr}(z)$ for a scalar $z$. To compute $\mathbb E_{\bm x_1, \dots, \bm x_K} [(\bm X^T \bm X)^{-1}]$, we note that, by definition, $\bm X^T \bm X$ follows Wishart distribution $\mathcal W_p(\sigma^2 \bm I_p, K)$ with $K$ degrees of freedom and thus $(\bm X^T \bm X)^{-1}$ follows inverse Wishart distribution $\mathcal W^{-1}_p(\sigma^{-2} \bm I_p, K + p + 1)$, whose mean equals $\frac{\bm I_p}{\sigma^2K}$. Combining this with the previous results, we conclude
\begin{align*}
&\mathrm{Var}_{\mathcal T}[\widehat y(\bm x; \mathcal T)] = \frac{\tilde \varepsilon^2}{\sigma^2 K}  \mathrm{tr}(\bm x \bm x^T) = \frac{\tilde \varepsilon^2}{\sigma^2 K} \| \bm x \|_2^2.
\end{align*}
The final estimate follows from the first two identities and the bias-variance decomposition.
\end{proof}
\clearpage
\section{Experiments}
\label{sec:ap2}

\subsection{Reconstruction Network - vSHARP}
In our main experiments, we employed the variable Splitting Half-quadratic ADMM algorithm for Reconstruction of inverse-Problems (vSHARP) as our reconstruction network, which is an unrolled physics-guided DL-based method \cite{yiasemis2023vsharp} that has previously been applied in accelerated brain, prostate and dynamic cardiac MRI reconstruction  \cite{yiasemis2023vsharp, yiasemis2023deep}. The vSHARP algorithm incorporates the half-quadratic variable splitting method to the optimization problem presented in \eqref{eq:variational_problem}, introducing an auxiliary variable $\vec{z}$:
\begin{equation}
    \min_{\vec{x}^{'}, \vec{z}}\frac{1}{2}\left|\left| \mathcal{A}_{\vec{M}, \mat{S}}(\vec{x}^{'}) - \Tilde{\vec{y}}_{\mat{M}}\right|\right|_2^2 +  \mathcal{G}(\vec{z}) \quad \text{s.t. } \vec{x}^{'} = \vec{z}.
    \label{eq:vsharp_var}
\end{equation}

Subsequently, \eqref{eq:vsharp_var} is iteratively unrolled over $T$ iterations using the Alternating Direction Method of Multipliers (ADMM). The ADMM formulation consists of three key steps: (a) a denoising step to refine the auxiliary variable $\vec{z}$, (b) data consistency for the target image $\vec{x}$, and (c) an update for the Lagrange Multipliers $\vec{u}$ introduced by ADMM:
\begin{subequations}
    \begin{gather}
        \vec{z}_{t+1} =  \mathcal{H}_{\boldsymbol{\psi}_{t+1}}\left( \vec{z}_{t}, \vec{x}_{t}, \vec{u}_{t} /{\mu}_{t+1}\right),\label{eq:z_step}\\
        \vec{x}_{t+1}  =  \arg\min_{\vec{x}^{'}} \Big|\Big| \mathcal{A}_{\vec{M}, \mat{S}}  (\vec{x}^{'})  \, -  \,  \Tilde{\vec{y}}_{\mat{M}}\Big|\Big|_2^2 +  \mu\left|\left|\vec{x}^{'} - \vec{z}_{t+1} + \vec{u}_{t}/{\mu}_{t+1} \right|\right|_2^2,
        \label{eq:x_step}\\
        \vec{u}_{t+1} = \vec{u}_{t} + \mu_{t+1} \left(\vec{x}_{t+1} - \vec{z}_{t+1}\right).
        \label{eq:u_step}
    \end{gather}
\end{subequations}

In \eqref{eq:z_step}, $\mathcal{H}_{\boldsymbol{\psi}_{t+1}}$ denotes a convolutional based DL image denoiser with trainable parameters $\boldsymbol{\psi}_{t+1}$, and $\eta_{t+1}$ a trainable learning rate. At each iteration, $\mathcal{H}_{\boldsymbol{\psi}_{t+1}}$ takes as input the previous predictions of the three variables and outputs an estimation of the auxiliary variable $\vec{z}$. Equation \ref{eq:x_step} is solved numerically by unrolling further a gradient descent scheme over $T_{\vec{x}}$ iterations. The last step in \eqref{eq:u_step}, involves a straightforward computation.  The initial approximations for $\vec{x}$ and $\vec{z}$ are taken as: 	 $\vec{x}_{0},\, \vec{z}_0 = \mathcal{R}_{\mat{S}} \circ \mathcal{F}^{-1} \left(\Tilde{\vec{y}}_{\mat{M}}\right).$  Moreover, for $\vec{u}_{0}$, vSHARP employs a trainable replication-padding and dilated convolutional-based network represented by $\mathcal{U}_{\boldsymbol{\phi}}$: $\vec{u}_{0} = \mathcal{U}_{\boldsymbol{\phi}}(\vec{x}_0).$

\subsection{Datasets Information}

\begin{table}[!ht]
\centering
\caption{Dataset characteristics and splits.}
\label{tab:S1}
\resizebox{\textwidth}{!}{%
{\renewcommand{\arraystretch}{2}
\begin{tabular}{|cc|c|c|c|c|}
\hline
\multicolumn{2}{|c|}{\textbf{Dataset}}                                                                                                        & \textbf{fastMRI Knee \cite{zbontar2019fastmri}}                                           & \textbf{fastMRI Brain \cite{zbontar2019fastmri}}                                   & \textbf{fastMRI Prostate \cite{cmrxrecon,cmrxrecondataset}}                           & \textbf{CMRxRecon Cine \cite{tibrewala2023fastmri}}                                \\ \hline
\multicolumn{2}{|c|}{\textbf{Field Strength}}                                                                                                 & 1.5 T, 3.0 T                                                                              & 1.5 T, 3.0 T                                                                       & 3.0 T                                                                                 & 3.0 T                                                                              \\ \hline
\multicolumn{2}{|c|}{\textbf{Sequence}}                                                                                                       & \begin{tabular}[c]{@{}c@{}}Proton Density with\\ and without fat suppression\end{tabular} & \begin{tabular}[c]{@{}c@{}}T1-w pre and post \\ contrast, T2-w, FLAIR\end{tabular} & T2-w                                                                                  & Cine                                                                               \\ \hline
\multicolumn{2}{|c|}{\textbf{Subjects}}                                                                                                       & \begin{tabular}[c]{@{}c@{}}Healthy or \\ Abnormality present\end{tabular}                 & \begin{tabular}[c]{@{}c@{}}Healthy or \\ Pathology present\end{tabular}            & Cancer Patients                                                                       & Healthy                                                                            \\ \hline
\multicolumn{2}{|c|}{\textbf{Acquisition}}                                                                                                    & Cartesian                                                                                 & Cartesian                                                                          & Cartesian                                                                             & Cartesian                                                                          \\ \hline
\multicolumn{2}{|c|}{\textbf{\begin{tabular}[c]{@{}c@{}}Fully Sampled or \\ Subsampled\end{tabular}}}                                         & Fully Sampled                                                                             & Fully Sampled                                                                      & \begin{tabular}[c]{@{}c@{}}Three averages (2x) / \\ GRAPPA reconstructed\end{tabular} & \begin{tabular}[c]{@{}c@{}}One average (3x) / \\ GRAPPA reconstructed\end{tabular} \\ \hline
\multicolumn{2}{|c|}{\textbf{No. Coils}}                                                                                                      & 15                                                                                        & 2-24                                                                               & 10-30                                                                                 & 10                                                                                 \\ \hline
\multicolumn{2}{|c|}{\textbf{No. Volumes Used}}                                                                                               & 973                                                                                       & 2,991                                                                              & 312                                                                                   & 473                                                                                \\ \hline
\multicolumn{2}{|c|}{\textbf{No. Slices Used}}                                                                                                & 34,742                                                                                    & 47,426                                                                             & 9,508                                                                                 & 3,185                                                                              \\ \hline
\multicolumn{1}{|c|}{\multirow{3}{*}{\textbf{\begin{tabular}[c]{@{}c@{}}Split Size\\ (No. Volumes/\\ No. Slices)\end{tabular}}}} & Training   & 973 / 34,742                                                                              & 2,991 / 47,426                                                                     & 218 / 6,647                                                                           & 203 / 1,364                                                                        \\ \cline{2-6} 
\multicolumn{1}{|c|}{}                                                                                                           & Validation & -                                                                                         & -                                                                                  & 48 / 1,462                                                                            & 111 / 731                                                                          \\ \cline{2-6} 
\multicolumn{1}{|c|}{}                                                                                                           & Test       & -                                                                                         & -                                                                                  & 46 / 1,399                                                                            & 159 / 1,090                                                                        \\ \hline
\end{tabular}
}
}
\end{table}

\subsection{SSL Subsampling Partitioning}

Let $\mat{M}_{i}$ denote the sampling set. Here we describe $\mat{M}_{i}$ as a sampling mask in the form of a squared array of size $n = n_x \times n_y$ such that:
\begin{equation*}
    (\mat{M}_{i})_{kj} =  \Bigg\{\begin{array}{ll}
        1, & \text{if } (k, j) \text{ is sampled}\\
        0, & \text{if } (k, j) \text{ is not sampled}.\\
        \end{array}
\end{equation*}

The set $\mat{\Theta}_{i}$ is obtained by selecting elements from $\mat{M}_{i}$ using a variable density 2D Gaussian scheme with a standard deviation of $\sigma$ pixels and mean vector as the center of the sampling set $\vec{M}_{i}$, up to the number of elements determined by a ratio $q_i$, determined such that $q_i = \frac{|\mat{\Theta}_{i}|}{|\mat{M}_{i}|}$, where $| \cdot |$ here denotes the cardinality. Mathematically, the selection process for $\mat{\Theta}_{i}$ from $\mat{M}_{i}$ can be described  by the following algorithm:

\begin{algorithm}[H]
    \SetAlgoLined
    \KwData{Squared array $\mat{M}_{i}$ of size $n_x \times n_y$, ratio $0<q_i<1$, standard deviation $\sigma$}
    \KwResult{Set $\mat{\Theta}_{i}$}
    Initialize $\mat{\Theta}_{i}$ as an array of zeros of the same size as $\mat{M}_{i}$\;
    \While{$\frac{|\mat{\Theta}_{i}|}{|\mat{M}_{i}|} < q_i$}{
        Generate $(k, j)$ from $\mathcal{N}\left([\frac{n_x}{2}, \frac{n_y}{2}], \, \sigma^2\mat{I}_2\right)$\;
        \If{$(\mat{\Theta}_{i})_{kj} == 0$}{
            $(\mat{\Theta}_{i})_{kj} \leftarrow 1$\;
        }
    }
    \caption{Generation of $\mat{\Theta}_{i}$ using Gaussian Sampling}
\end{algorithm}

Subsequently, to partition $\mat{M}_{i}$, we set ${\mat{\Lambda}_{i}} = {\mat{M}_{i} \smallsetminus \mat{\Theta}_{i}}$. Note that by selecting $q_i=0$ then $\mat{\Theta}_{i} = \emptyset$, and for $q_i=1$ then $\mat{\Theta}_{i} = \mat{M}_i$.

For our comparison study in \Section{subsec4.1} of the main paper for SSL and JSSL experiments we randomly selected the ratio $q_i$ between 0.3, 0.4, 0.5, 0.6, 0.7 and 0.8. For our alternative configurations study in Section 4.2, we employed an identical partitioning ratio selection except for the case of a fixed ratio of $q_i = 0.5$. In all our JSSL and SSL experiments we used $\sigma = 3.5$.

%%------------------------------------------------FIGURE ------------------------------------------------%%
\begin{figure}[!hbt]
    \centering
    \begin{subfigure}{0.3\textwidth}
        \centering
        \includegraphics[width=1\columnwidth]{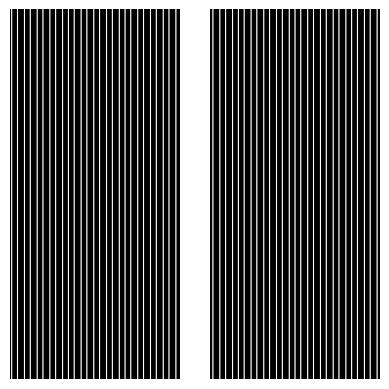}
        \caption{$\mat{M}_{i}$}
    \end{subfigure}%
        \hfill
    \begin{subfigure}{0.3\textwidth}
        \centering
        \includegraphics[width=1\columnwidth]{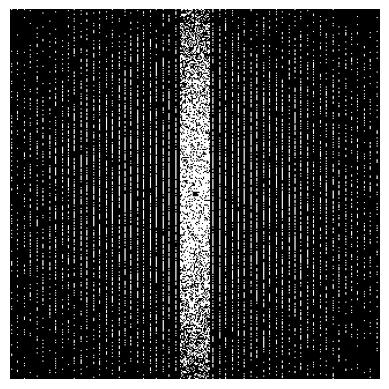}
        \caption{$\mat{\Theta}_{i}$}
    \end{subfigure}
    \hfill
    \begin{subfigure}{0.3\textwidth}
        \centering
        \includegraphics[width=1\columnwidth]{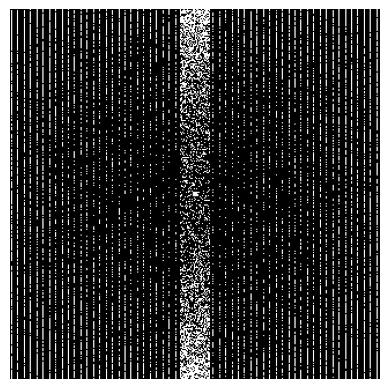}
        \caption{$\mat{\Lambda}_{i} = \mat{M}_{i}  \smallsetminus  \mat{\Theta}_{i}$}
    \end{subfigure}%
    \caption{Example of SSL subsampling partitioning with a ratio $q=\frac{1}{2}$ and $w=4$. 
    }
    \label{fig:S1}
\end{figure}
%%------------------------------------------------FIGURE ------------------------------------------------%%

\subsection{Choice of Loss Functions}
Here we provide the mathematical definitions of the loss function components we employed in our experiments.

\begin{itemize}
  \item Image Domain Loss Functions
  \begin{itemize}
    \item Structural Similarity Index Measure (SSIM) Loss
    
        \begin{equation}
                \mathcal{L}_\text{SSIM} := 1 - \text{SSIM}, \quad  \text{SSIM}(\vec{a},\,\vec{b}) =
            \frac{1}{N}\sum_{i=1}^{N} \frac{(2\mu_{\vec{a}_i}\mu_{\vec{b}_i} + \gamma_1)(2\sigma_{\vec{a}_i\vec{b}_i} + \gamma_2)}{({\mu^2_{\vec{a}_i}} +{\mu^2_{\vec{b}_i}} + \gamma_1)({\sigma^2_{\vec{a}_i}} + {\sigma^2_{\vec{b}_i}} + \gamma_2)},
            \label{eq:ssim_metric} 
        \end{equation}
    
        where $\vec{a}_i, \vec{b}_i, i=1,...,N$ represent $7\times 7$ square windows of  $\vec{a}, \vec{b}$, respectively, and  $\gamma_1 = 0.01$, $\gamma_1 = 0.03$. Additionally, $\mu_{\vec{a}_i}$, $\mu_{\vec{b}_i}$ denote the means of each window, $\sigma_{\vec{a}_i}$ and $\sigma_{\vec{b}_i}$ represent the corresponding standard deviations. Lastly, $\sigma_{\vec{a}_i\vec{b}_i}$ represents the covariance between $\vec{a}_i$ and $\vec{b}_i$.
    
    \item High Frequency Error Norm (HFEN)

    \begin{equation}
        \mathcal{L}_{\text{HFEN}_k} := {\text{HFEN}_k}, \quad  {\text{HFEN}_k}(\vec{a},\,\vec{b})  = \, \frac{|| \mathcal{G}(\vec{a}) - \mathcal{G}(\vec{b}) ||_k}{||\mathcal{G}(\vec{b})||_k},
    \label{eq:hfen}
\end{equation}

 where $\mathcal{G}$ is a Laplacian-of-Gaussian filter  \cite{5617283} with kernel of size $15\times 15$ and with a standard deviation of 2.5, and $k\,=\, 1 \text{ or } 2$.

    \item Mean Average Error (MAE / $L_1$) Loss
    \begin{equation}
        \mathcal{L}_1(\vec{a},\,\vec{b}) = || \vec{a} - \vec{b} ||_1 = \sum_{i=1}^n |a_{i} - b_{i}|
    \end{equation}
    \end{itemize}
    
    \item $k$-space Domain Loss Functions

    \begin{itemize}
        \item Normalized Mean Squared Error (NMSE)
        \begin{equation}
            \mathcal{L}_{\text{NMSE}} := \text{NMSE}, \quad  \text{NMSE}(\vec{a},\, \vec{b})\,= \, \frac{||\vec{a}\,-\,\vec{b}||_2^2}{||\vec{a}||_2^2}\,= \, \frac{\sum_{i=1}^n(a_{i} - b_{i})^2}{\sum_{i=1}^n a_{i}^{2}}.
            \label{eq:nmse}
        \end{equation}
        \item Normalized Mean Average Error (NMAE)
        \begin{equation}
            \mathcal{L}_{\text{NMAE}} := \text{NMAE}, \quad  \text{NMAE}(\vec{a},\, \vec{b})\,= \, \frac{||\vec{a}\,-\,\vec{b}||_1}{||\vec{a}||_1}\,= \, \frac{\sum_{i=1}^n |a_{i} - {b}_{i}|}{\sum_{i=1}^n |a_{i}|}.
            \label{eq:nmae}
        \end{equation}
    \end{itemize}

\end{itemize}

\newpage
\section{Supplementary Experiments}
\label{sec:ap3}

In this section, we present supplementary experiments aimed at further validating the efficacy of our proposed JSSL method. These experiments involve a comparative analysis between JSSL and traditional SSL MRI Reconstruction. We adapt the methodologies outlined in \Section{sec3} of our main paper, utilizing two distinct reconstruction models instead of the vSHARP architecture:

\begin{itemize}
    \item Utilizing a plain image domain U-Net \cite{Ronneberger2015}, a non-physics-based model that takes an undersampled-reconstructed image as input and refines it. Specifically, we employ a U-Net with four scales and 64 filters in the first channel.
    
    \item Employing an End-to-end Variational Network (E2EVarNet) \cite{Sriram2020}, a physics-based model that executes a gradient descent-like optimization scheme in the $k$-space domain. For E2EVarNet, we perform 6 optimization steps using U-Nets with four scales and 16 filters in the first scale.
\end{itemize}

To estimate sensitivity maps for both architectures, an identical Sensitivity Map Estimation (SME) module was integrated, mirroring the experimental setup outlined in our primary paper.

Both models underwent training and evaluation on data subsampled with acceleration factors of 4, 8, and 16, with ACS ratios of 8\%, 4\%, and 2\% of the data shape.  Choices of hyperparameters for JSSL and SSL are the same as in the comparative experiments presented in \Section{sec4}. Additionally, choices for proxy and target datasets, as well as data splits, are also the same as in the main paper.

Experimental setups were executed on NVIDIA A100 80GB GPUs, utilizing 2 GPUs for U-Net and 1 GPU for E2EVarNet. We employed batch sizes of 2 and 4 for U-Net and E2EVarNet, respectively, on each GPU. The optimization procedures, initial learning rates, and the employed optimizers aligned with those utilized in the main paper. 

\begin{table}[!htb]
\centering
\caption{Model architectures parameters.}
\resizebox{\textwidth}{!}{%
{\renewcommand{\arraystretch}{2}
\begin{tabular}{ccccccccccc}
\cline{1-3} \cline{5-11}
\textbf{Model} &           & \textbf{\begin{tabular}[c]{@{}c@{}}Parameter \\ Count (millions)\end{tabular}} & \textbf{} & \textbf{Physics Model}               &  & \textbf{\begin{tabular}[c]{@{}c@{}}Training\\ Iterations (k)\end{tabular}} & \textbf{} & \textbf{\begin{tabular}[c]{@{}c@{}}Learning Rate\\ Reduction Schedule\end{tabular}} &  & \textbf{\begin{tabular}[c]{@{}c@{}}Inference Time (s)\\ per volume\end{tabular}} \\ \cline{1-1} \cline{3-3} \cline{5-5} \cline{7-7} \cline{9-9} \cline{11-11} 
vSHARP         &           & 95                                                                             &           & ADMM                                 &  & 700                                                                        &           & 150k                                                                                &  & 17.7                                                                             \\
U-Net          &           & 33                                                                             &           & -                                    &  & 375                                                                        &           & 75k                                                                                 &  & 13.1                                                                             \\
E2EVarNet      & \textbf{} & 13.5                                                                           &           & Gradient Descent in $k$-space Domain &  & 250                                                                        &           & 50k                                                                                 &  & 13.9                                                                             \\ \hline
\end{tabular}%
}
}
\label{tab:S2}
\end{table}

\Table{S2} details the model specifics for all considered architectures presented in both the main paper and this section.

\newpage

\begin{figure}[!ht]
    \centering
    \includegraphics[width=\textwidth]{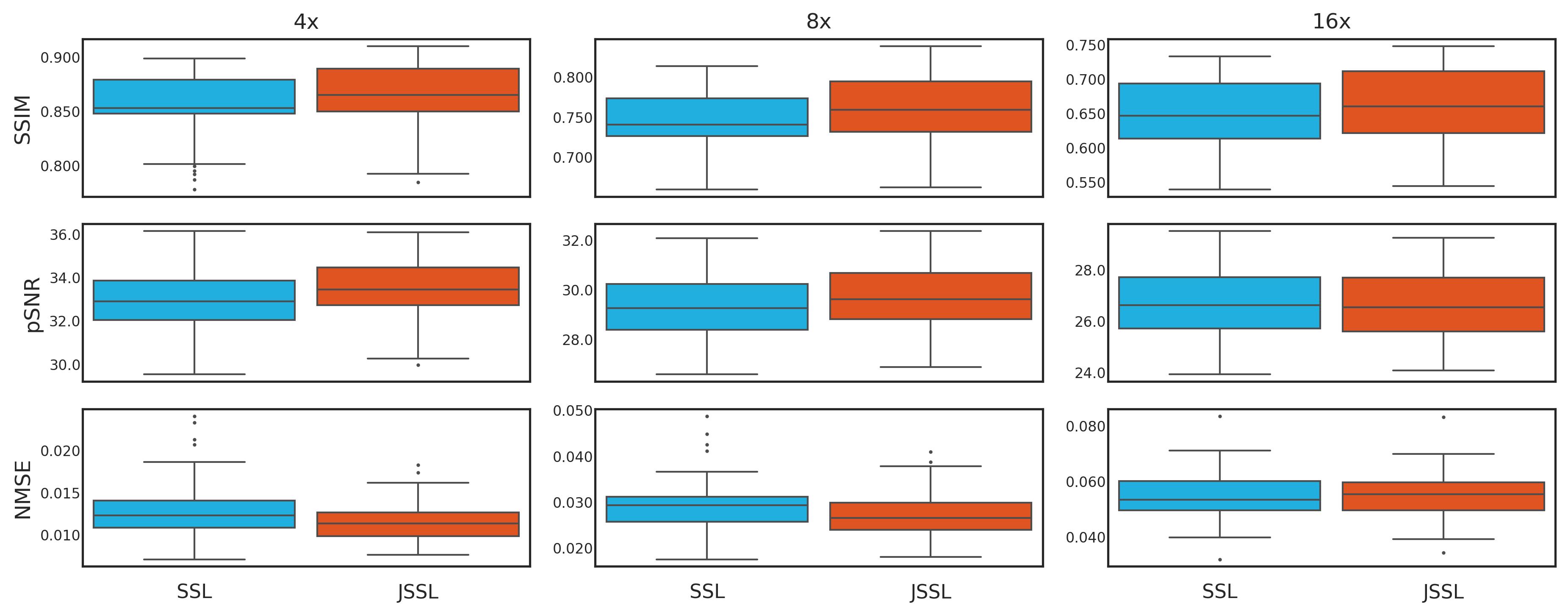}
    \caption{Evaluation metric results for SSL and JSSL methods using an image domain U-Net architecture as a reconstruction network.}
    \label{fig:S2}
\end{figure}
\begin{figure}[!ht]
    \centering
    \includegraphics[width=\textwidth]{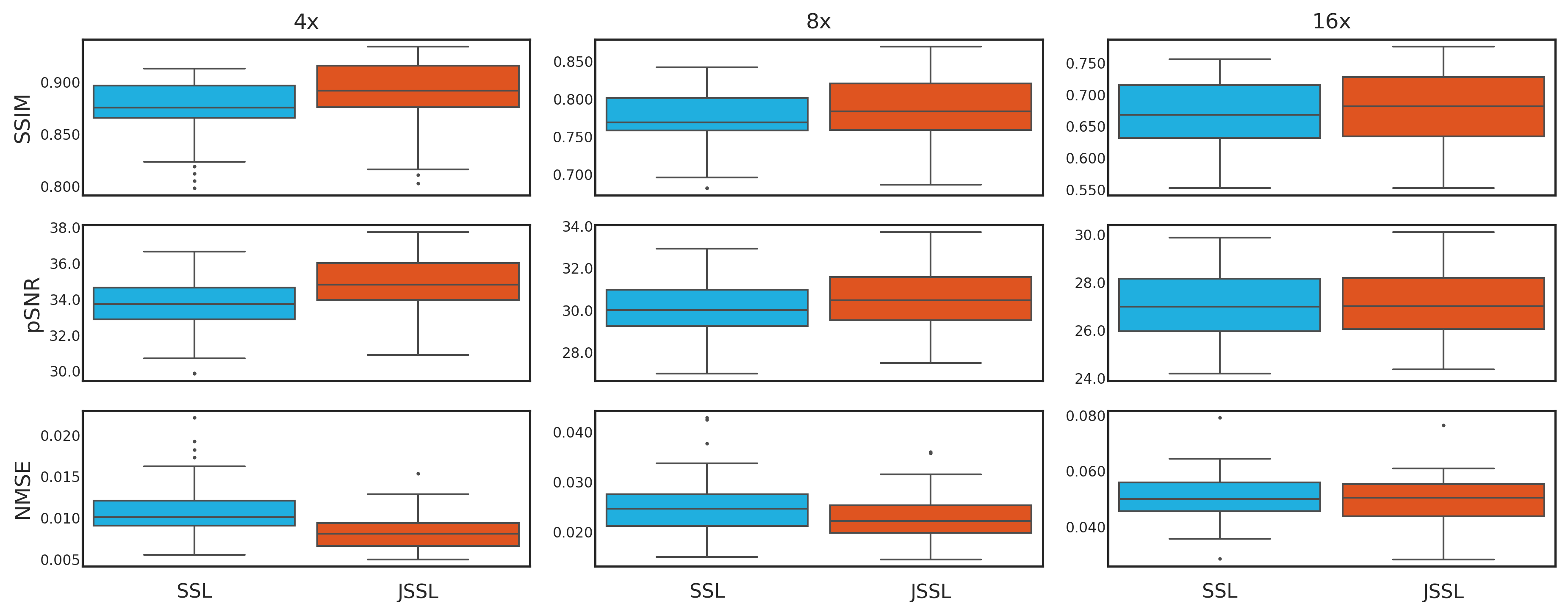}
    \caption{Evaluation metric results for SSL and JSSL methods using an E2EVarNet architecture as a reconstruction network.}
    \label{fig:S3}
\end{figure}

\clearpage 

\section{Additional Figures}
\label{sec:ap4}

\subsection{Comparison Studies}
\begin{figure}[!ht]
    \centering
    \includegraphics[width=\textwidth]{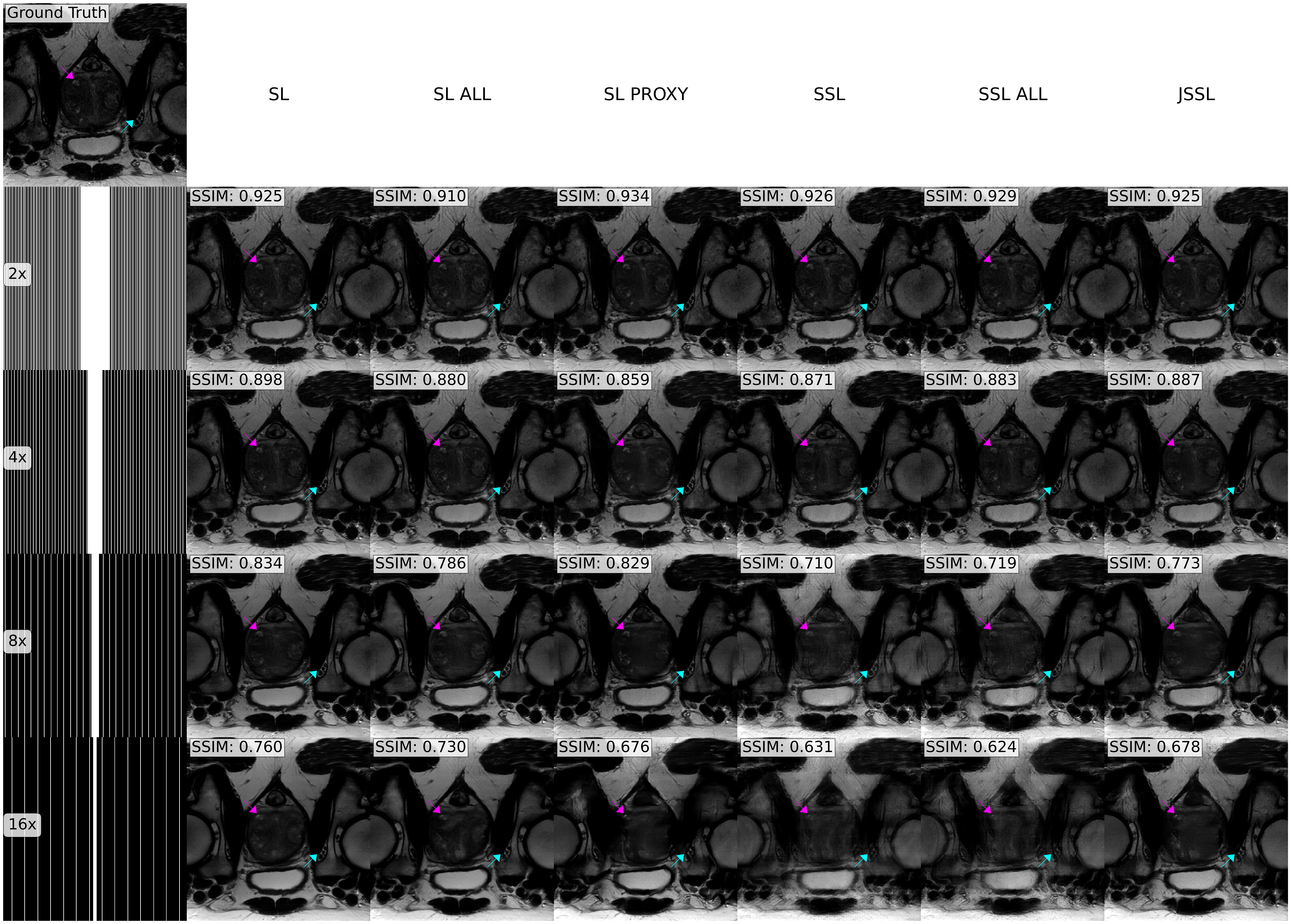}
    \caption{Example reconstructions of a prostate slice subsampled at different acceleration factors (left-most column) from the test set (experiment set \textbf{A}) from each training setup in the comparison studies (Section 4.1 of the main paper) visualized against the ground truth. Arrows point to regions of interest.}
    \label{fig:S4}
\end{figure}
\begin{figure}[!ht]
    \centering
    \includegraphics[width=\textwidth]{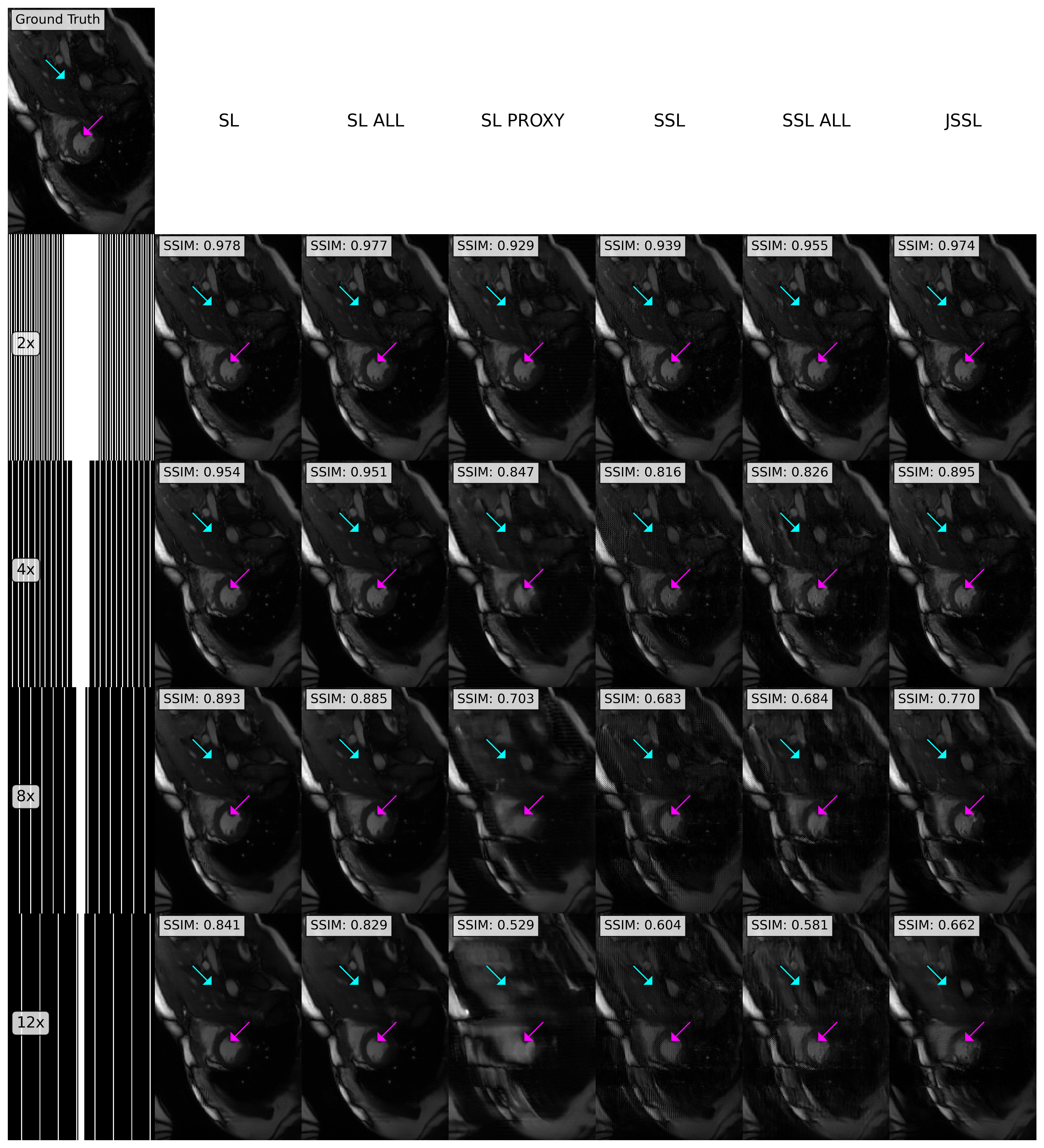}
    \caption{Example reconstructions of a cardiac slice subsampled at different acceleration factors (left-most column) from the test set (experiment set \textbf{B}) from each training setup in the comparison studies (Section 4.1 of the main paper) visualized against the ground truth. Arrows point to regions of interest.}
    \label{fig:S5}
\end{figure}

\break
\clearpage
\subsection{Ablation Studies}

%%%%%%%%%%%%%%%%%%%%%%%%%%%%%%%%%%%%%%INPUT LARG FIG%%%%%%%%%%%%%%%%%%%%%%%%%%

\begin{afterpage}{\clearpage} 
\begin{figure}[!ht]
    \centering
    \includegraphics[width=\textwidth]{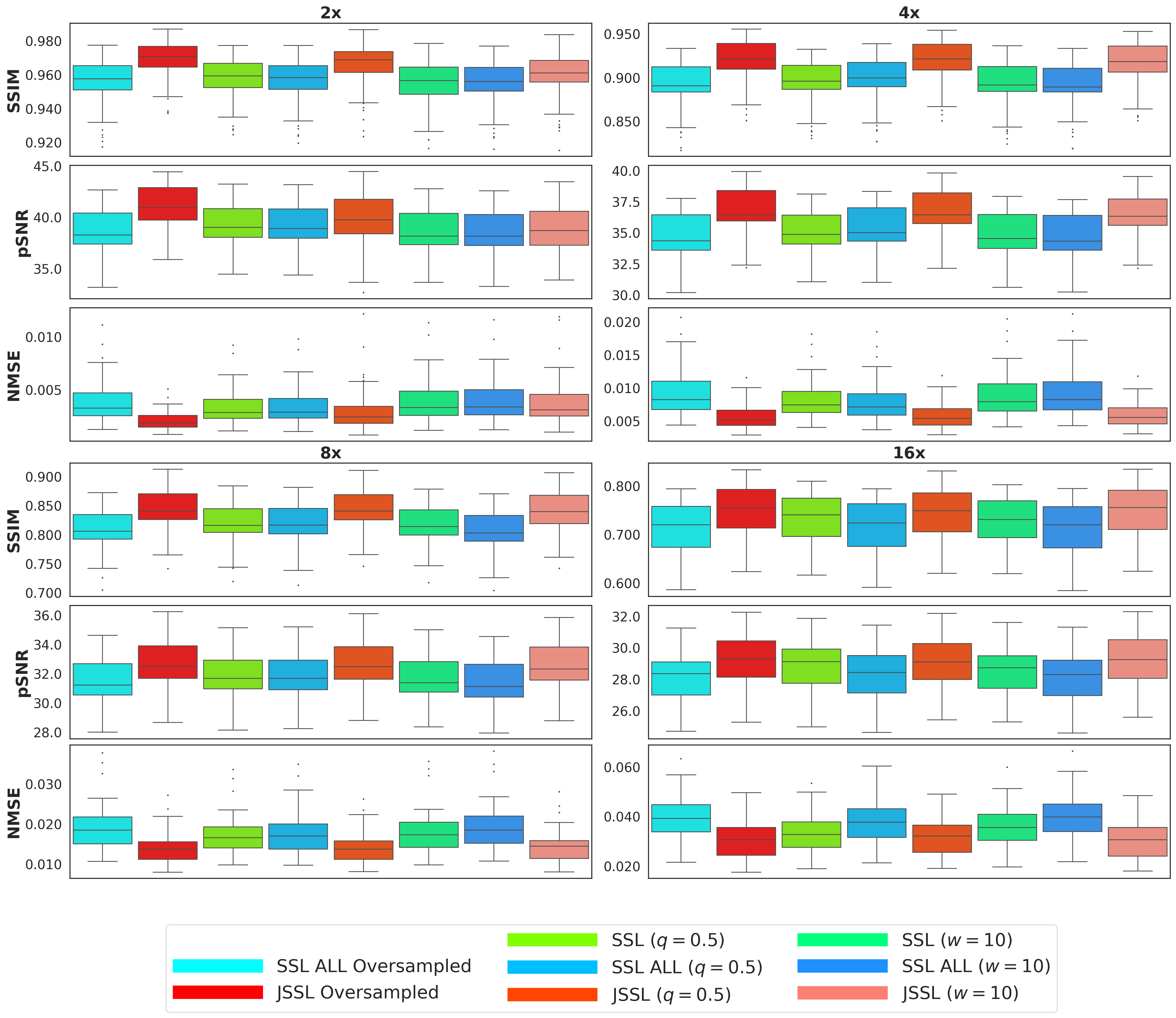}
    \caption{Evaluation results for varying JSSL and SSL setups for the alternative configurations studies of Section 4.2 of the main paper.}
    \label{fig:S6}
\end{figure}
\end{afterpage}
\thispagestyle{empty}
\clearpage

\clearpage
\begin{afterpage}{\clearpage} 
\begin{figure}[!ht]
    \centering
    \includegraphics[angle=90,width=0.67\textwidth]{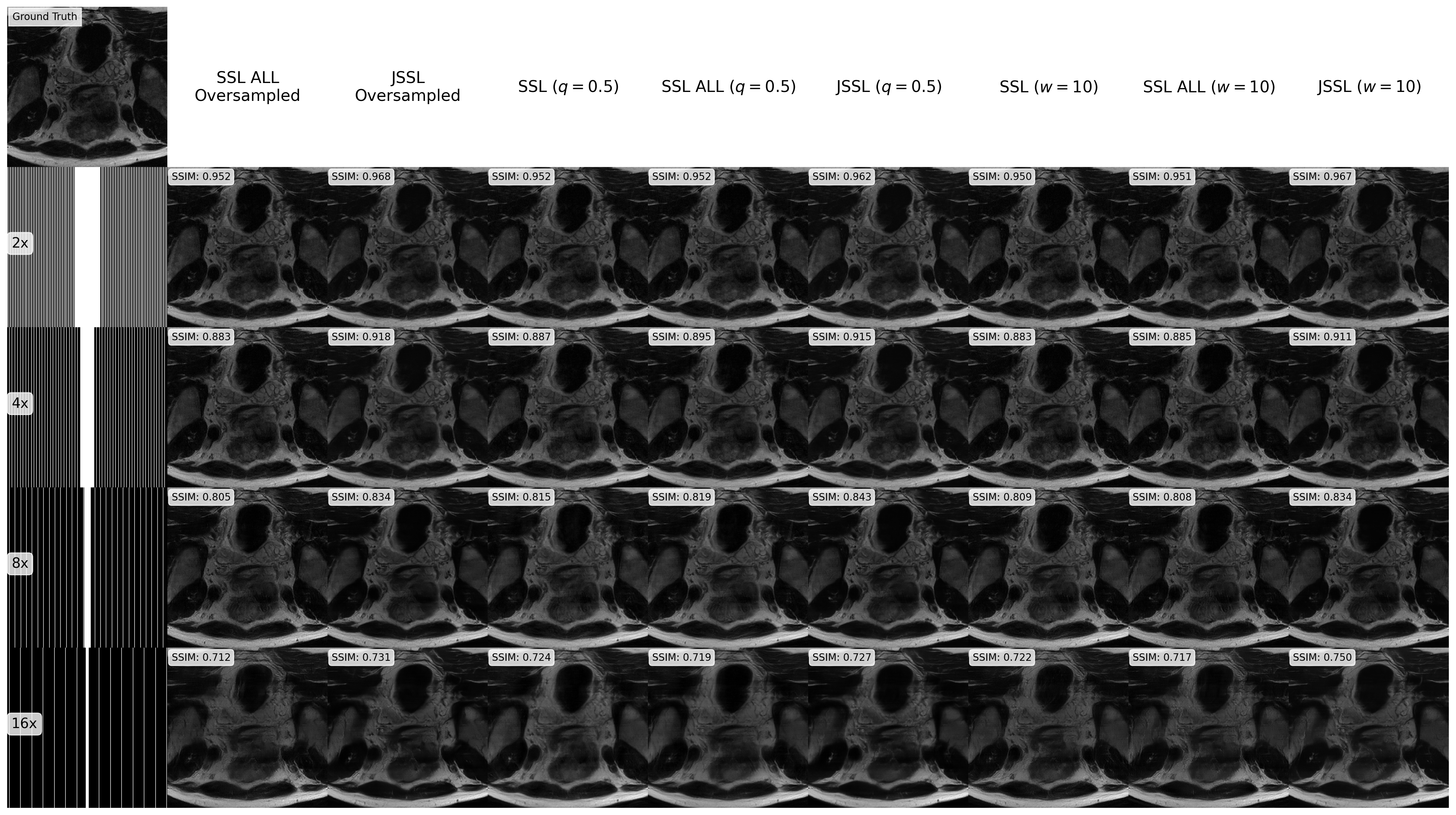}
    \caption{Example 1 of reconstructions of a prostate slice subsampled at different acceleration factors (left-most column) from the test set from each training setup in the alternative configurations studies (Section 4.2 of the main paper) visualized against the ground truth. Arrows point to regions of interest.}
    \label{fig:S7}
\end{figure}
\end{afterpage}
\thispagestyle{empty}
\clearpage

\clearpage
\begin{afterpage}{\clearpage} 
\begin{figure}[!ht]
    \centering
    \includegraphics[angle=90,width=0.67\textwidth]{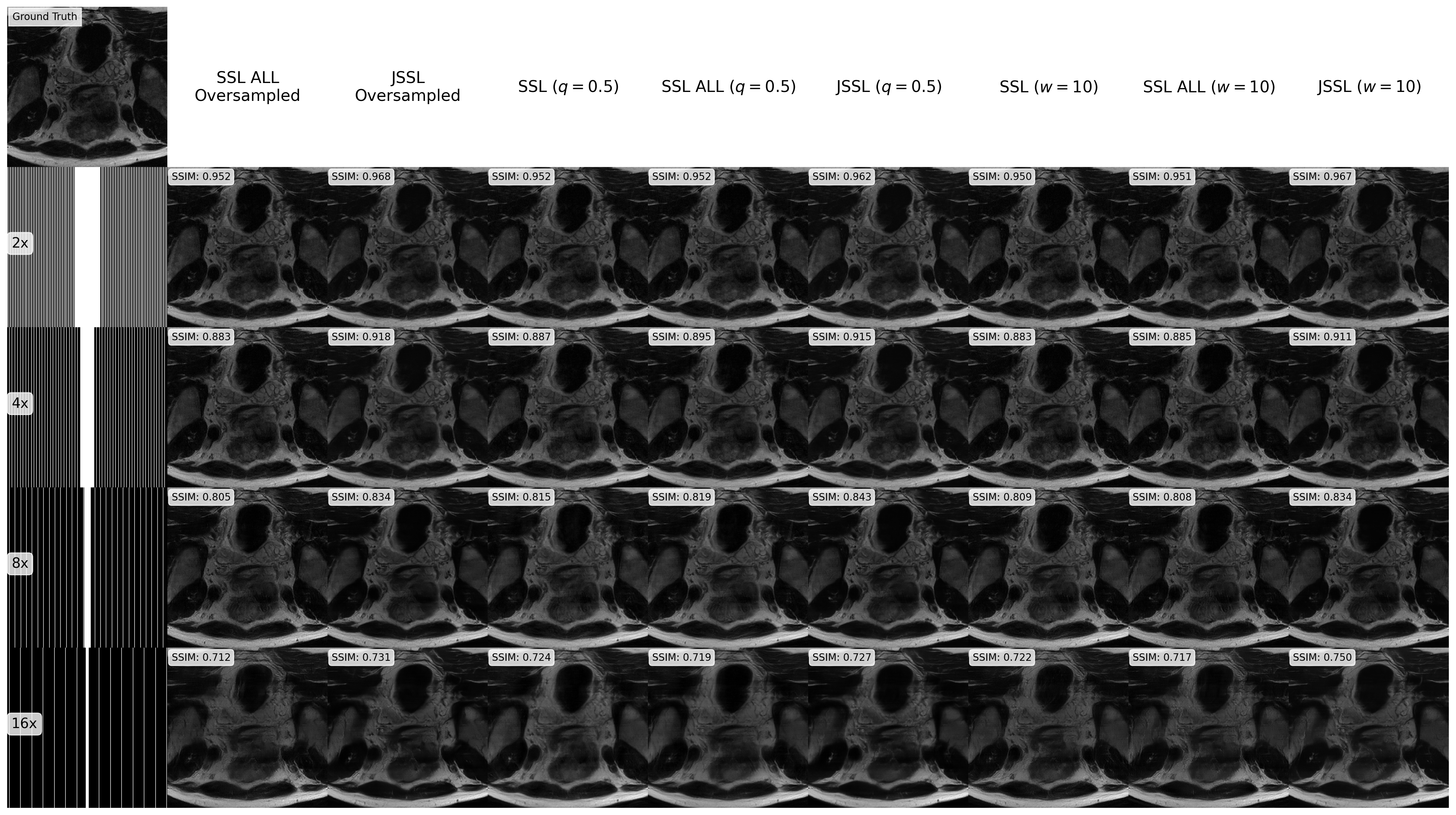}
    \caption{Example 2 of reconstructions of a prostate slice subsampled at different acceleration factors (left-most column) from the test set from each training setup in the alternative configurations studies (Section 4.2 of the main paper) visualized against the ground truth. Arrows point to regions of interest.}
    \label{fig:S8}
\end{figure}
\end{afterpage}
\thispagestyle{empty}
\clearpage

%Bibliography
\bibliographystyle{unsrt}  
\bibliography{bibliography}

\end{document}